\DeclareMathOperator{\argmin}{arg\,min}
\begin{document}
\title{\codename: Practical Privacy-Preserving Collaborative Machine Learning
}
%
%
\author{Yanjun Zhang \and
	Guangdong Bai \textsuperscript{(\Letter)}
	\and
	Xue Li \and
	Caitlin Curtis  \and\\
	Chen Chen\and
	Ryan K L Ko}
\authorrunning{Y. Zhang et al.}
%
\institute{The University of Queensland, St Lucia, Queensland, Australia \\
	\email{\{yanjun.zhang, g.bai, c.curtis, chen.chen, ryan.ko\}@uq.edu.au} \\
    \email{xueli@itee.uq.edu.au}}

%
\maketitle              
\begin{abstract}
Collaborative learning enables two or more participants, each with their own training dataset, to collaboratively learn a joint model.
It is desirable that the collaboration should not cause the disclosure of either the raw datasets of each individual owner or the local model parameters trained on them.
This privacy-preservation requirement has been approached through differential privacy mechanisms, homomorphic encryption~(HE) and secure multiparty computation~(MPC), but existing attempts may either introduce the loss of model accuracy or imply significant computational and/or communicational overhead.

In this work, we address this problem with the lightweight additive secret sharing technique. 
We propose \codename, a framework for protecting 
local data and local models while ensuring the correctness of training processes.
\codename employs secret sharing technique 
for securely evaluating addition operations in a multiparty computation environment, and achieves practicability by employing only the homomorphic addition operations.
We formally prove that it guarantees privacy preservation even though the majority ($n-2$ out of $n$) 
of participants are corrupted. 
With experiments on real-world datasets, we further demonstrate that \codename retains high efficiency.
It achieves a speedup of more than 45X over the state-of-the-art MPC-/HE-based 
schemes 
for training linear/logistic regression, and 216X faster for training neural network.
\keywords{privacy  \and machine learning \and collaborative learning.}
\end{abstract}

\section{Introduction}

The performance of machine learning largely relies on the availability of datasets.
To take advantage of massive data owned by multiple entities,
collaborative machine learning has been proposed to enable two or more data owners to construct a joint model.
One typical scenario demanding 
collaborative learning is
where the \emph{features} of a same sample are held by multiple data owners.
The collaboration among owners can improve the model accuracy by leveraging additional features from each other.
A real-world example is that a recommender system can take use of the ratings of a same item among multiple online merchants to enhance its predictive power.

To address the privacy concerns arising from collaborative learning, 
many studies~\cite{chen2018privacy, wang2019scalable, gascon2017privacy, zhang2019pefl} have been proposed 
to provide \emph{data locality} by distributing learning algorithms onto data owners such that the data can be confined within their owners.
Despite this, their learning processes still entail sharing 
locally trained models, in order to synthesize the final models. 
However these local models are subject to information leakage. For example, model-inversion attacks ~\cite{melis2019exploiting, fredrikson2015model} are able to restore training data from them. 
In addition, in the scenarios where the model itself represents intellectual property, e.g., in financial market systems, it is an essential requirement for  the local models to be kept confidential~\cite{papernot2016towards}.

To provide a supplementary, i.e., \emph{privacy-preserving synthesis of the local models}, differential privacy mechanisms and cryptographic mechanisms may be employed.
The former~\cite{zheng2019bdpl, hagestedt2019mbeacon, abadi2016deep, song2013stochastic, shokri2015privacy , dwork2014algorithmic, hu2019fdml} usually entails adding noise on the model parameters, 
causing loss in the accuracy of the final models.
The cryptographic mechanisms such as homomorphic encryption (HE)~\cite{chen2018privacy, marc2019privacy, sharma2019confidential} and secure multiparty computation (MPC)~\cite{ gascon2017privacy, gascon2016secure, liu2017oblivious, mohassel2017secureml, bonawitz2017practical}
are able to yield identical models as those trained on plaintext data, but are known to be limited by the significant computational or communicational overheads.

This work focuses on the practicability of the cryptographic solutions.
We propose a lightweight framework named \codename for privacy-preserving collaborative learning in the distributed feature scenario.
\codename adopts the two-layer architecture commonly used in previous privacy-preserving collaborative learning frameworks~\cite{mohassel2017secureml, hu2019fdml, wang2019scalable}.
It has a local node layer consisting of participating data owners, and an aggregation node~(which can be untrustworthy).
The main strategy of \codename is to dispense the homomorphic \emph{multiplication operations} and \emph{non-linear functions} on ciphertext, as they are far more costly in computation and communication than the \emph{addition operations}~\cite{gentry2009fully,albrecht2018homomorphic,yao1986generate, goldreich2019play}.
To this end, we redesign the workflow of collaborative learning, so that it employs only the homomorphic addition operations provided by \emph{additive secret sharing scheme}~\cite{bogdanov2008sharemind} for synthesizing local models and intermediate outcomes.
The computation that is carried out by the aggregation mode uses only the \emph{sum} of the intermediate results that are generated by the local nodes~(detailed in Section~\ref{sec:workflow}).
As such, \codename achieves significant cost savings, in comparison with state-of-the-art cryptographic solutions.

The redesigned workflow also ensures that both the raw data and local models are always kept with their owners.
We formally prove \codename preserves privacy in such a way that the \emph{honest-but-curious} participants, who have access to the sum of the intermediate outcomes produced by the local nodes and additional knowledge learned from the training iterations, are unlikely~(i.e., with a negligible probability $\varepsilon$) to reveal the raw 
training data or the local model parameters of other participants.

Notably, our new collaborative learning workflow in \codename introduces no sacrifice to the accuracy of the models, and also supports a wide range of machine learning algorithms as previous work does~\cite{abadi2016deep, mohassel2017secureml, hu2019fdml}. Intuitively, our solution makes use of the chain rules in calculus to decompose gradient descent optimization into computational primitives, and to distribute them to the local nodes and the aggregation node respectively.
When they collaborate together, these primitives can be recombined to achieve the correctness of learning.
We prove that such correctness is guaranteed for any algorithm that uses gradient descent for optimization, including but not limited to linear regression, logistic regression, and a variety of neural networks.

\paragraph{Contributions}
 In general, our contributions can be summarized as follows.
\begin{itemize}
	\item \textbf{A Novel Privacy-preserving Collaborative Framework.}
	We propose a novel framework \codename for collaborative learning with distributed features.
    It preserves privacy while enabling a wide range of machine learning algorithms and achieving high computation efficiency. Not only does \codename achieve the data locality as previous work does, but it also keeps the local models confidential.

	\item \textbf{Provable Privacy Preservation and Correctness Guarantee.}
	We prove the privacy preservation 
	of \codename, demonstrating a negligible probability of corrupted parties revealing either the original data or the trained parameters from other honest parties. 
	We also prove that \codename ensures the learned model is identical to that in the traditional non-distributed framework.

	\item \textbf{Experimental Evaluations.} We conduct  
	experiments on real datasets, showing that \codename achieves a significant improvement of efficiency over the state-of-the-art cryptographic solutions based on MPC and HE.
For example, \codename  achieves around $22.5$ minutes for a two-hidden-layer neural network to process all samples in the MNIST dataset~\cite{lecun-mnisthandwrittendigit-2010}, while it takes more than $81$ hours with a state-of-the-art MPC protocol SecureML~\cite{mohassel2017secureml}.
\end{itemize}


\section{Background}\label{sec:background}
In this section, we introduce the background knowledge that is necessary to understand our framework.

\subsection{Gradient Descent Optimization}
\label{sec:gd}
Gradient descent is by far the most commonly used optimization strategy among various machine learning and deep learning algorithms.
It is used to find the values of 
coefficients that minimize a cost function as far as possible.
Given a defined cost function $J$,  the coefficient matrix 
$W$ is derived by the optimization $\argmin_W J$, and is updated as: 
\begin{equation}
W := W - \alpha\frac{\partial{J}}{\partial{W}}
\end{equation}

Given a particular training dataset $X$ and a label matrix $y$, the cost function $J$ can be defined as $J(\sigma(XW), y, W)$, 
where $\sigma$ is determined by the learning model. For example, in logistic regression, $\sigma$ is usually a sigmoid function $1/(1+e^{-z})$, while in neural network, $\sigma$ is a composite function that is known as forward propagation. According to the chain rule in calculus, the gradient with respect to $W$ is computed as $\frac{\partial{J}}{\partial{\sigma}}\frac{\partial{\sigma}}{\partial{XW}}\frac{\partial{XW}}{\partial{W}} + \tau$, where $\tau$ is the gradient with respect to the regularization term, which is independent to $X$. Let 
$\Delta$ be $\frac{\partial{J}}{\partial{\sigma}}\frac{\partial{\sigma}}{\partial{XW}}$, and $\frac{\partial{XW}}{\partial{W}}$ equals to $X$, then the gradient with respect to $W$ can be written as:
\begin{equation}
\label{equ:chain}
\frac{\partial{J}}{\partial{W}} = \Delta X + \tau.
\end{equation}

As such, we can decompose the gradient descent optimization into $\Delta$ , $X$ and $\tau$, in which $\Delta$ is a function of $XW$.
This provides an algorithmic foundation for \codename's distribution of learning algorithms.


\subsection{Additive Secret Sharing Scheme}
Secret sharing schemes aim to securely distribute secret values amongst a group of participants. \codename employs the secret sharing scheme proposed by \cite{bogdanov2008sharemind}, which uses \emph{additive sharing} over $\mathbb{Z}_{2^{32}}$.
In this scheme, a secret value $srt$ is split to $s$ shares $E^1_{srt},...,E^s_{srt} \in \mathbb{Z}_{2^{32}}$  such that
\begin{equation}
	E^1_{srt}+E^2_{srt}+...+E^s_{srt} \equiv srt \mod 2^{32},
\end{equation}
and any $s-1$ elements $E^{i_1}_{srt},...,E^{i_{s-1}}_{srt}$ are uniformly distributed.
This prevents any participant who has part of the shares from deriving the value of $srt$, unless all participants join their shares.

In addition, the scheme has a homomorphic property that allows efficient and secure addition on a set of secret values $srt_1,...,srt_s$ held by corresponding participants $S_1,...,S_s$.
To do this, each participant $S_i$ executes a randomised sharing algorithm $Shr(srt_i, S)$ to split its secret $srt_i$ into shares $E^{1}_{srt_i},...,E^{s}_{srt_i}$, and distributes each $E^{j}_{srt_i}$ to the participant $S_j$.
Then, each $S_i$  locally adds the shares it holds, 
$E^{i}_{srt_1},...,E^{i}_{srt_s}$, 
to produce $\sum_{j=1}^sE^{i}_{srt_j}$ (denoted by $E^i$ for brevity). After that, a reconstruction algorithm $Rec(\{(E^{i}, S_i)\}_{S_i \in S})$, which takes $E^i$ from each participant and add them together, can be executed by an aggregator to reconstruct the $\sum_{i=1}^s srt_i$ without revealing any secret addends $srt_i$.


\section{Design of \codename}
\label{sec:design}

\subsection{Scope and Threat Model}
\label{sec:threat}
The involved parties in \codename are a set of local nodes (i.e., data owners) $S_1, ..., S_s$ and an aggregation node  $Agg$. 
Each local node holds part of features
of the training samples,
denoted by $X^l$ $(l \in \{1,...,s\})$, and the corresponding local model $W^l$ $(l \in \{1,...,s\})$ trained on $X^l$.
Each $X^l \in \mathbb{R}^{m \times d_l}$ is a $m \times d_l$ matrix representing $m$ training samples with $d_l$ features, and $W^l \in \mathbb{R}^{d_l \times k}$ is a
matrix of coefficients, where $k$ is the number of output classes.
In \codename, $m$ and $k$  are public and known by every party, and $d_l$  is private and only known by the corresponding data owner $S_l$.
We use $X$ to denote the vertical concatenation of the local training datasets $X^1, ..., X^s$.
Then we know that $X$ is a $m \times n$ matrix  
where $n = \sum_{l=1}^{s}d_l$ (i.e., the total number of features in $X$).
Since $d_l$  is private, $n$ is unknown 
unless all of the local nodes join their views.

\codename aims to 
defend against an honest-but-curious adversary $\mathcal{A}$, who follows the collaboration protocols and training procedures, but is intending to obtain the datasets of other local nodes~(i.e., $X^l$) and/or the model parameters trained out of them~(i.e., $W^l$).
The adversary may control $Agg$, and $t$ out of $s$ local nodes. Here, we conservatively assume $t < s-1$, which implies that at least two local nodes need to be out of the adversary's control, as $s-1$ comprised local nodes who has the sum of their shares, colluding with $Agg$ who has the sum of all shares, will be able to obtain the share of the remaining node by a simple subtraction
~(detailed in Section \ref{sec:security}).

\subsection{Definitions of Privacy Preservation and Correctness}
Keeping local data/model private and providing functional correctness are the main properties \codename aims to achieve. Below we present the definitions 
of these two properties.
\begin{definition}
	\label{def:security}
	\textbf{($\varepsilon$-privacy)}
A mechanism preserves $\varepsilon$-privacy if the probability for a probabilistic polynomial-time (PPT) adversary to derive $X^l$ or $W^l$ of any benign node $S_l$ 
based on its knowledge is not greater than $\varepsilon$. 
\end{definition}

\begin{definition}
	\label{def:correct}
	\textbf{(Correctness)}
	Given a function $\mathcal{F}$ that takes as input a training dataset $X$, and its distributed version $\mathcal{F^\prime}$ that takes as inputs $X$'s vertical partitions $X^1, ..., X^s$, we say
$\mathcal{F^\prime}$ is correct if $\mathcal{F^\prime}(\{(S_l, X^l), Agg\}_{S_l \in S})=\mathcal{F}(X)$.
\end{definition}

\subsection{Workflow of \codename}
\label{sec:workflow}

Figure \ref{fig:Gradient} illustrates the end-to-end workflow of \codename, 
which is divided into the following steps.
\begin{figure}
	\includegraphics[width=\textwidth]{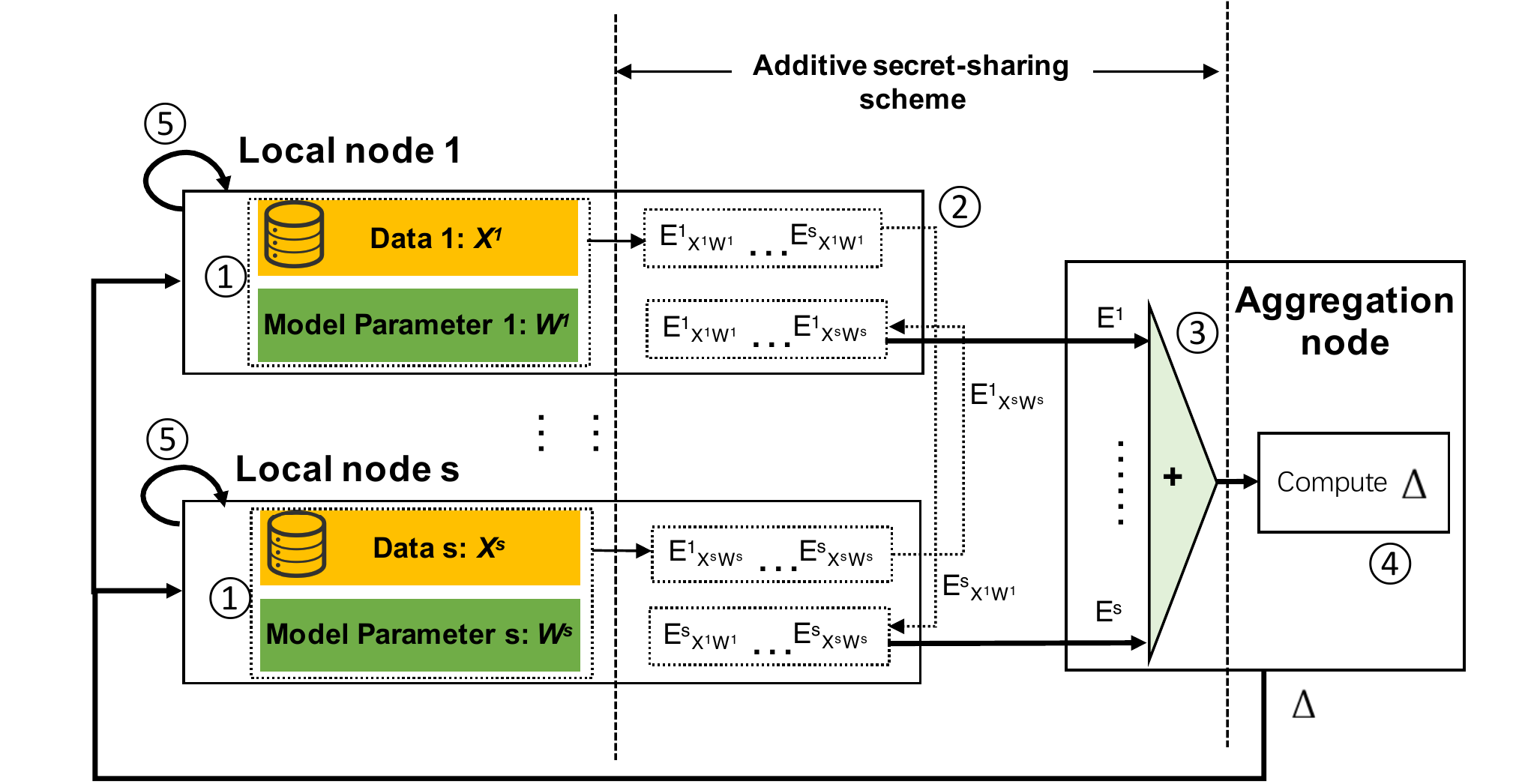}
	\centering
	\caption{The end-to-end workflow of the \codename} 
	\label{fig:Gradient}
\end{figure}

\textbf{Initialization}: Each local node $S_l$ holds its own training dataset $X^l \in \mathbb{R}^{m \times d_l}$, and randomly initializes its coefficient matrix 
$W^l \in \mathbb{R}^{d_l \times k}$.

\textbf{Step 1}: In each iteration of gradient descent, each $S_l$ multiplies $X^l$ by $W^l$ locally,   resulting in a $X^lW^l \in \mathbb{R}^{m \times k}$.
The value of $d_l$, i.e., the number of features in $X^l$, is removed by such a matrix multiplication.

\textbf{Step 2}: Each $S_l$ executes the sharing algorithm $Shr$ to split $X^lW^l$ into $s$ shares using the \emph{additive secret sharing scheme}.
\begin{equation}
{\{(S_i, E^i_{X^lW^l})\}_{S_i \in S}} \leftarrow Shr(X^lW^l, S),
\end{equation}
in which $Shr$  takes as input a secret $X^lW^l$ and a set
$S$ of local nodes, and produces a set of shares $E^i_{X^lW^l} (i \in \{1,...,s\})$, each of which is distributed to a different $S_i \in S$.  Then, each  $S_l$ calculates the sum of all shares it receives, and gets $E^l = \sum_{j = 1}^s E_{X^jW^j}^l$.
\textbf{Step 3}:
$Agg$ collects all $E^l$ from local nodes and add them together. Since
\begin{equation}
\sum_{l = 1}^s E^l = \sum_{l = 1}^s \sum_{j = 1}^s E_{X^jW^j}^l = \sum_{j = 1}^s \sum_{l = 1}^s E_{X^jW^j}^l = \sum_{j = 1}^s X^jW^j,
\end{equation}
this addition reconstructs the homomorphic addition result $XW$ (equals $\sum\limits_{l=1}^{s} X^lW^l$).

\textbf{Step 4}: 
$Agg$ computes $\Delta = \frac{\partial{J}}{\partial{\sigma}}\frac{\partial{\sigma}}{\partial{XW}}$ (\emph{c.f.}, Equation \ref{equ:chain}), 
and sends $\Delta$ back to the local nodes.

\textbf{Step 5}: With the received $\Delta$, each $S_l$ updates its local coefficient matrix 
$W^l$.

\begin{equation}
W^l \leftarrow W^l - \alpha(\Delta X^l + \tau^l)
\end{equation}

\textbf{Step 1-5} are repeated for the next training iteration until convergence.

\section{Privacy Preservation Analysis}
\label{sec:security}

In this section, we analyze the privacy preservation of \codename's learning process.
To this end, we first investigate the overall knowledge that can be learned by an adversary from the training iterations (Section \ref{sec:party}).
Then we prove that the knowledge is limited such that the desired $\varepsilon$-privacy~(Definition \ref{def:security}) is achieved with a negligible $\varepsilon$ (Section \ref{sec:maintheorem}).


\subsection{Preliminaries}
\label{sec:pre}
We start with the following lemma 
that is soon used in our proof.

\begin{lemma}
	\label{lemma:eigen}
	Consider a positive (semi-)definite matrix $A$ that is obtained as the product of a real number matrix $B$ by its transpose $B^T$
	\begin{equation}
	A = BB^T,
	\end{equation}
	where $B$ is of rank $r$.  
	Without knowing the number of columns of $B$, the probability of solving the $B$ given $A$, denoted as $P(B|A)$, is $ \leq (r!)^{-1}$.  
\end{lemma}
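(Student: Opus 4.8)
The plan is to argue that, given only $A$ and the rank $r$ (but \emph{not} the number of columns of $B$), the equation $A=BB^{T}$ has at least $r!$ solutions that are mutually indistinguishable from the adversary's viewpoint, so that no guessing strategy can output the true $B$ with probability exceeding $(r!)^{-1}$. I would first pin down the solution set. Since $A$ is positive semi-definite of rank $r$, its spectral decomposition carries exactly $r$ nonzero eigenvalues; writing $V\in\mathbb{R}^{m\times r}$ for the associated orthonormal eigenvectors and $D=\mathrm{diag}(\lambda_{1},\dots,\lambda_{r})$ with all $\lambda_{i}>0$, one checks that $BB^{T}=A$ with $\mathrm{rank}(B)=r$ is equivalent to $B=VD^{1/2}R$ for some $R$ with orthonormal rows: the column space of $B$ must coincide with that of $A$, which forces $B=VC$ with $CC^{T}=D$, and $R:=D^{-1/2}C$ does the job, the converse being immediate. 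Thus $A$ fixes the column space of $B$ but neither its number of columns $c$, which may be any integer $\ge r$, nor $R$, which already ranges over a positive-dimensional set once $c>r$.

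Next I would extract an explicit finite family of $r!$ indistinguishable solutions by permuting columns. If $B_{0}$ is the true matrix, with $c\ (\ge r)$ columns, then $\mathrm{rank}(B_{0})=r$ guarantees $r$ linearly independent columns, say in positions $i_{1}<\dots<i_{r}$. For every permutation matrix $P$ we have $(B_{0}P)(B_{0}P)^{T}=B_{0}PP^{T}B_{0}^{T}=A$, so $B_{0}P$ is again a solution; restricting $P$ to the $r!$ permutations that only shuffle the positions $i_{1},\dots,i_{r}$ produces $r!$ \emph{pairwise distinct} matrices, since any two that coincided as matrices would equate distinct (linearly independent) columns. All of these yield exactly the same $A$, and an adversary holding only $A$ and $r$ — and in particular ignorant of $c$ — has nothing to separate them. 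Assuming the natural symmetry that the prior on $B$ is exchangeable under permutations of its columns, the posterior given $A$ weights these $r!$ candidates equally, hence $P(B\mid A)\le(r!)^{-1}$; dropping the assumption that $c$ is unknown only shrinks the candidate set and so cannot increase this probability.

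I expect the main obstacle to be not the linear algebra but making ``the probability of solving $B$'' precise: the full solution set is a continuum (the orthogonal reparametrisations $B\mapsto VD^{1/2}R'$), so under any non-atomic prior the probability is literally $0$, and the statement only has content relative to a prior or a guessing strategy. I would therefore phrase the result as ``there exist at least $r!$ statistically indistinguishable solutions, hence success probability $\le(r!)^{-1}$ under column-exchangeability,'' and note that the remaining continuum of choices of $R$ makes the true figure even smaller — $(r!)^{-1}$ being simply the clean combinatorial bound to carry into the overall $\varepsilon$-privacy estimate. One minor case to flag is repeated eigenvalues of $A$: the $r!$ \emph{orderings of eigenpairs} may then collapse, but the column-permutation argument above is unaffected, since it uses only the existence of $r$ linearly independent columns of $B_{0}$, not distinctness of the spectrum.
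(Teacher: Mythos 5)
Your proposal is correct and arrives at the paper's bound, but by a somewhat different construction, and with more care about what the bound means. The paper also argues by exhibiting $r!$ indistinguishable solutions, but it produces them from the eigen-decomposition $A = U\Lambda U^{T}$: each of the $r!$ orderings of the nonzero eigenpairs gives a candidate $B = U\Lambda^{\frac{1}{2}}$, and the inequality (rather than equality) is then justified only by the informal remark that $B$ need not have orthogonal columns, so further solutions exist. You instead (i) characterize the entire solution set as $B = VD^{1/2}R$ with $R$ having orthonormal rows, and (ii) obtain the $r!$ candidates as column permutations of the true $B_{0}$ acting on $r$ linearly independent columns, then invoke a column-exchangeable prior to conclude that no guess succeeds with probability above $(r!)^{-1}$. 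Your route buys several things: the candidates are genuine permutations of the actual secret, so they are valid alternative inputs whatever the spectrum of $A$ (your worry about repeated eigenvalues is, incidentally, not fatal to the paper either --- reordering equal eigenpairs still permutes the columns of $U\Lambda^{\frac{1}{2}}$, giving distinct matrices --- but your construction sidesteps the question entirely); the parametrization $B=VD^{1/2}R$ turns the paper's vague ``more than $r!$ solutions'' remark into a precise statement about a continuum of solutions; and you explicitly confront the point the paper leaves implicit, namely that ``the probability of solving $B$'' is only well defined relative to a prior or a guessing strategy, which is exactly the right way to make the lemma usable in the $\varepsilon$-privacy argument of Theorem~\ref{theo:security}. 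What the paper's shorter route buys is brevity: it needs neither the column-space argument nor the exchangeability assumption, at some cost in rigor.
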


\begin{proof}
	Since the matrix $A$ is positive (semi-)definite, 
	there exists an eigen-decomposition such that 
	\begin{equation}
	\label{equ:eigen}
	A = U\Lambda U^T,
	\end{equation}
	where $U$ denotes a matrix 
	of eigenvectors of $A$ (each column of $U$ is an eigenvector of $A$), and $\Lambda$ denotes a diagonal matrix whose diagonal elements are the eigenvalues.
For a positive (semi-)definite matrix, all the eigenvalues are non-negative.
A different ordering of the eigenvector columns results in a different $U$ and a corresponding $\Lambda$~\cite{golub1996matrix}.
	
	With the eigen-decomposition, $B$ can be constructed by $B = U\Lambda^{\frac{1}{2}}$,
	where $\Lambda^{\frac{1}{2}}$ has the square roots of eigenvalues as its diagonal elements, and all its remaining values are zeros.
Each eigen-decomposition leads to a different $U$ and a corresponding $\Lambda$, resulting in a unique solution of $B$.
A matrix $B$ of rank $r$ has $r$ non-zero eigenvalues and thus there are $r!$ different possible orderings of eigenvector columns,
implying $r!$ different $U$s and the corresponding $\Lambda$s.
Consequently, there are $r!$ different possible solutions of computing $B$, which gives the probability of solving $B$ given $A$ with eigen-decomposition $P_{eigen}(B|A) = (r!)^{-1}$. 
	
	In addition, without the knowledge of the number of columns in $B$, there are more than $r!$ possible solutions of solving $B$.
	This is because from the eigen-decomposition construction, $B$'s columns are orthogonal. 
	In general, the matrix $B$ need not have orthogonal columns (it can be rectangular)~\cite{horn2012matrix}.
	Thus, $P(B|A) \leq (r!)^{-1}$. 
\end{proof}

\subsection{Party Knowledge}
\label{sec:party}

We define the \emph{party knowledge} as the overall knowledge that can be learned by adversary parties. It includes the parties' own inputs, and the additional knowledge that can be inferred from the training iterations.
We prove that the {party knowledge} in \codename is bounded within a certain range.
In particular, we demonstrate that the overall \emph{party knowledge} of adversary party $I$ in \codename is a set of $\{X^{l^\prime}\mid_{l^\prime \in I}, W^{l^\prime}\mid_{l^\prime \in I}, XW, \sum X^lW^l \mid_{l \in S \setminus I}, \sum X^l(X^l)^T \mid_{l \in S \setminus I}, XX^T\}$, where $\{X^{l^\prime}\mid_{l^\prime \in I}, W^{l^\prime}\mid_{l^\prime \in I}, XW\}$ are the adversary's own input  in the workflow, and $\{\sum X^lW^l \mid_{l \in S \setminus I}, \sum X^l(X^l)^T \mid_{l \in S \setminus I}, XX^T\}$ are the additional information that can be inferred from the training iterations.
The party knowledge we derive in this Section will be used in Section \ref{sec:maintheorem} to prove the privacy preservation of \codename.

We use the simulation paradigm (also known as the \textit{real/ideal model})~\cite{goldreich2019play} to prove such a bound of party knowledge.
The simulation paradigm compares what an adversary can do in a real protocol execution $REAL$ to what it can do in an ideal setting with a trusted functionality (simulation) $SIM$~\cite{goldreich2019play}.
Formally, the protocol $\mathcal{P}$ securely computes a functionality  
$\mathcal{F}_{\mathcal{P}}$ if for every adversary in $REAL$, there exists an adversary in $SIM$, such that the view of the adversary from  $REAL$ is indistinguishable from the view of the adversary from $SIM$.
A perfect indistinguishability~\cite{canetti2008theory} between the view of $REAL$ and $SIM$ guarantees that the adversary, without error probability, can learn nothing more than their own inputs and the information required by $SIM$ for the simulation.

We introduce some notations used in our proof.
We use $X^{S^\prime} = \{X^l|S_l \in S^\prime\}$ 
to indicate the inputs of any subset of local nodes $S^\prime \subseteq S$.
Given any subset $I_0 \subseteq S$ of the parties \emph{without} the knowledge of $XW$, and subset $I_1 \subseteq S \cup \{Agg\}$ of the parties \emph{with} the knowledge of $XW$,
let $REAL (X^S, XW, S, t, \mathcal{P}, I)$ denote the combined views of all parties in $I = I_0\cup I_1$ from the execution of a real protocol $\mathcal{P}$, where $t$ is the adversary threshold (recall that $t < s-1$). Let $SIM(X^{I}, Z, S, t, \mathcal{F}_{\mathcal{P}}, I )$ denote the views of $I$ from an ideal execution, where $Z$ is the information required by $SIM$ for the simulation.
In other words, the $Z$ indicates the party knowledge that the adversary can and only can learn other than their own inputs.
\begin{theorem}
	\label{theo:party}
	\textbf{(Party Knowledge)}
The simulator $SIM(X^I, Z, S, t, \mathcal{F}_{\mathcal{P}}, I)$ is perfectly indistinguishable from $REAL$ with respect to their outputs, namely
	\[ REAL (X^S, XW, S, t, \mathcal{P}, I) \equiv SIM(X^I, Z, S, t, \mathcal{F}_{\mathcal{P}}, I)\]
	if and only if
	\[Z = \{z_1 = \sum X^lW^l\mid_{l\in S \setminus  I},  z_2 = \sum X^l(X^l)^T\mid_{l\in S \setminus  I}, z_3 = XX^T\}.\] 
\end{theorem}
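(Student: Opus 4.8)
The plan is to establish the two implications of the equivalence separately, in both cases arguing message-by-message about what the corrupted set $I$ observes during the (iterated) execution of $\mathcal{P}$ from Section~\ref{sec:workflow}. First I would catalogue the $REAL$ view: (i) the corrupted parties' own data $X^{l'}$ and evolving models $W^{l'}$, $l'\in I$, together with the product $XW_{(t)}$ that $Agg$ reconstructs in Step~3 of every round $t$; (ii) the secret shares $E^{i}_{X^jW^j}$ that corrupted nodes receive in Step~2; (iii) the broadcast $\Delta_{(t)}$ of Step~4; and (iv) the corrupted models after Step~5. Items (iii)--(iv) are deterministic, public functions of $XW_{(t)}$ and of $I$'s own data, hence add nothing new; and by the security of the additive secret-sharing scheme of \cite{bogdanov2008sharemind}, together with the threshold $t<s-1$ (which guarantees that, even with $Agg$ contributing the grand total, $I$ never holds all $s$ shares of any single $X^jW^j$), item (ii) is a tuple of independent, uniformly distributed elements of $\mathbb{Z}_{2^{32}}$ that is statistically independent of the honest inputs. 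So the whole problem reduces to the sequence of reconstructed products $XW_{(0)},XW_{(1)},\dots$ and how it depends on the honest parties' data and models.

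For the ``if'' direction I would build a simulator $SIM$ that, given only $X^{I}$ and $Z=\{z_1,z_2,z_3\}$ (note that, unlike $REAL$, $SIM$ is \emph{not} handed $XW$), reproduces the $REAL$ view with identical distribution. It samples the Step~2 shares uniformly at random --- which matches $REAL$ by the paragraph above --- and reconstructs the product sequence deterministically. Unrolling the Step~5 update $W^{l}\leftarrow W^{l}-\alpha(\Delta X^{l}+\tau^{l})$ and summing over $l$ yields $XW_{(t+1)}=XW_{(t)}-\alpha\,(XX^{T})\,\Delta_{(t)}-\alpha\sum_{l}X^{l}\tau^{l}_{(t)}$, since $X=[X^1|\cdots|X^s]$ gives $\sum_l X^l(X^l)^{T}=XX^{T}=z_3$. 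The round-$0$ product is $XW_{(0)}=\sum_{l\in I}X^lW^l_{(0)}+z_1$, so $z_1$ supplies exactly the honest aggregate that $SIM$ lacks; $z_2$ (equivalently $z_3$ minus the Gram blocks of the corrupted data) lets $SIM$ propagate the honest part of the recurrence, namely $z_1^{(t+1)}=z_1^{(t)}-\alpha\,z_2\,\Delta_{(t)}-\alpha\sum_{l\in S\setminus I}X^l\tau^l_{(t)}$; and the regularisation gradients $\tau^l$ are the fixed, $X$-independent gradients of the penalty (for $\ell_2$ they are linear in $W^l$ and are absorbed into $z_1^{(t)}$). Feeding each reconstructed $XW_{(t)}$ through the public maps produces $\Delta_{(t)}$ and the corrupted models, completing a view identically distributed to $REAL$.

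For the ``only if'' direction I would show that the $REAL$ view already \emph{determines} $z_1$, $z_2$ and $z_3$, so no perfectly-indistinguishable simulator can be given strictly less. Since $I$ sees $XW_{(t)}$ and tracks its own updates, it computes $z_1^{(t)}=XW_{(t)}-\sum_{l\in I}X^lW^l_{(t)}$ at every round, reading off $z_1$; taking successive differences of the $XW_{(t)}$ isolates $\alpha\,(XX^{T})\,\Delta_{(t)}$, and of the $z_1^{(t)}$ isolates $\alpha\,z_2\,\Delta_{(t)}$, up to the known $\tau$-terms, and as soon as the observed $\Delta_{(t)}$ span enough directions --- which a non-degenerate training run does after finitely many iterations --- these linear systems recover $z_3=XX^{T}$ and $z_2$. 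Hence any indistinguishable $SIM$ must be handed exactly this information, and by the ``if'' direction no more. I expect the tight form of this necessity claim to be the main obstacle: one must verify that the \emph{entire} iterated transcript leaks nothing finer than $\{z_1,z_2,z_3\}$ --- e.g.\ no individual $X^l$, $W^l$ or partial sum --- which amounts to checking that any two honest-input configurations agreeing on $z_1,z_2,z_3$ induce identical $XW$-sequences and hence identical views; extra care is needed for the regularisation terms and for degenerate runs where the $\Delta_{(t)}$ fail to span, so the statement is best read as ``the view reveals at most, and generically exactly, $\{z_1,z_2,z_3\}$''.
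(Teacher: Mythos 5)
Your sufficiency argument is essentially the paper's own proof: the paper also builds the simulator iteration by iteration, exploiting the random initialization of the $W^l$ (its $SIM_0$ replaces the honest $X^lW^{l(0)}$ by pseudorandom vectors) and then propagating the view through the gradient recurrence, using $z_1^{(i)} = z_1^{(i-1)} - \frac{\alpha}{m} z_2\,\Delta^{(i-1)}$ for the parties without $XW$ and $XW^{(i)} = XW^{(i-1)} - \frac{\alpha}{m} z_3\,\Delta^{(i-1)}$ for those with it, exactly your unrolled update. You add two things the paper leaves implicit or omits: an explicit reduction of the Step-2 share messages to uniform, input-independent randomness via the $t<s-1$ threshold (the paper relies on this only in its threat-model discussion, not inside the proof), and an explicit treatment of the regularization gradients $\tau^l$ (the paper's simulator silently drops them, i.e.\ effectively assumes $\tau=0$ or absorbs it as you do). The more substantive divergence is your ``only if'' direction: the paper, despite stating the theorem as an equivalence, proves only sufficiency (``knowledge of $z_1,z_2,z_3$ is sufficient, completing the proof''), whereas you attempt necessity by showing the transcript lets $I$ recover $z_1$ directly and $z_2,z_3$ from differences of successive rounds. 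Your own caveat there is the right one: recovery of $z_2,z_3$ needs the observed $\Delta^{(t)}$ to span enough directions, so the claim holds only generically, and one must also check that the transcript reveals nothing \emph{finer} than $\{z_1,z_2,z_3\}$ (which is what the sufficiency/simulation argument delivers). So your proposal matches the paper on what the paper actually proves, and is more candid than the paper about what the ``if and only if'' phrasing would additionally require.
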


\begin{proof}
	We define the simulator through each of the $i^{th}$ training iteration as:
	
	\begin{itemize}
		\item $SIM_0$: This is the simulator for the \textbf{Initialization}. 
		In the step of initialization, the view of parties in $ I_0 \cup I_1$ does not depend on the inputs of the parties not in $I_0 \cup I_1$. Therefore, instead of sending the actual $\{X^lW^{l(0)}\}_{l \in S \setminus  \{ I_0 \cup I_1\}}$ of the parties $S \setminus  \{ I_0 \cup I_1\}$ to the aggregation node, the simulator can produce a simulation by running the parties $S \setminus \{ I_0 \cup I_1\}$ on a pseudorandom vector $\mu^{l(0)}$ in $\mathbb{R}^{m}$ as input, and then output the same pseudorandom vector to the aggregation node. 
		Since the model parameter $W^l$ is also randomized in the step of initialization, 
		the pseudorandom vectors for the inputs of all honest parties $S \setminus  \{ I_0 \cup I_1\}$, and the joint view of parties in $\{ I_0 \cup I_1\}$ will be identical to that in $REAL$
		\[ \{\mu^{l(0)}\mid_{l\in S \setminus \{ I_0 \cup I_1\}}, X^{l^\prime}W^{l^\prime (0)}\mid_{l^\prime\in   \{ I_0 \cup I_1\}}\} \equiv \{X^SW^{S(0)}\}.\]
		\item $SIM_i, i\geq 1$:
		This is the simulator for the $i^{th}$ training iteration ($i\geq 1$).
		The simulator computes  $\Delta^{(i)} = f(\Sigma^{(i)})$, where the function $f$ is determined by the learning model. For example, in the linear regression, $f(x) = x$, while in logistic regression, $f(x) = 1/(1+e^{-x})$.
		
		We respectively consider the simulator for $I_0$, $I_1$.		
		First, with respect to 
		$I_0$, the simulator computes $\Sigma^{(i)}$ as
		\[ \Sigma^{(i)} = \sum \mu^{l(i)}\mid_{l\in S \setminus  I_0} + \sum X^{l^\prime}W^{l^\prime (i)}\mid_{l^\prime\in   I_0} - y,\] 
		
		where $\mu^{l(i)}$ is computed by the result from the previous iteration as
		\[ \mu^{l(i)} = z_1^{l(i-1)} - \frac{\alpha}{m}z_2^l\Delta^{(i-1)}\mid_{l\in S \setminus  I_0}.\]
		Therefore,
		\[ \sum \mu^{l(i)}\mid_{l\in S \setminus  I_0} = \sum z_1^{l(i-1)} - \frac{\alpha}{m} \sum z_2^l\Delta^{(i-1)}\mid_{l\in S \setminus  I_0}.\]
		Note $X^{l^\prime}W^{l^\prime (i)}$ is also computed by the result from the previous iteration, and
		\[ \sum X^{l^\prime}W^{l^\prime (i)}\mid_{l^\prime\in   I_0} = \sum X^{l^\prime}W^{l^\prime (i-1)} - \frac{\alpha}{m} \sum X^{l^\prime}(X^{l^\prime})^T\Delta^{(i-1)}\mid_{l^\prime\in   I_0}.\]

		Then, the $\Sigma^{(i)}$ can be written as
		\begin{align*}
		\Sigma^{(i)} 
		= &\sum z_1^{l(i-1)}\mid_{l\in S \setminus  I_0} + \sum X^{l^\prime}W^{l^\prime (i-1)}\mid_{l^\prime\in   I_0}\\
		&- \frac{\alpha}{m}(\sum z_2^l\Delta^{(i-1)}\mid_{l\in S \setminus  I_0} + \sum X^{l^\prime}(X^{l^\prime})^T\Delta^{(i-1)}\mid_{l^\prime\in   I_0})  - y.
		\end{align*}

		
		Note that, in $REAL$, the output of $\Sigma^{(i)}$ by $S$ is
		\[ \sum X^SW^{S(i)} - y = \sum X^SW^{S(i-1)} - \frac{\alpha}{m} \sum X^s(X^s)^T\Delta^{(i-1)} - y.\]
		
		Thus, \{$\sum z_1^{l(i-1)}\mid_{l\in S \setminus  I_0}, \sum z_2^l\mid_{l\in S \setminus  I_0}, \sum X^{l^\prime}W^{l^\prime (i-1)}\mid_{l^\prime\in   I_0}, \sum X^{l^\prime}(X^{l^\prime})^T\mid_{l^\prime\in   I_0}$\} which is the joint view of  all honest parties $S \setminus  I_0$ and parties in $I_0$ will be  perfectly indistinguishable 
		to $\{\sum X^SW^{S(i-1)}, \sum X^s(X^s)^T\}$ which is the output in $REAL$.
		
		Next, we consider the simulator for $I_1$. With respect to $I_1$,  we let the simulator compute $\Sigma^{(i)}$ as
		\[ \Sigma^{(i)} = \mu^{(i)} - y,\]
		where $\mu^{(i)}$ is computed by the result from the previous iteration as
		\[  \mu^{(i)} = XW^{(i-1)} - \frac{\alpha}{m}z_3\Delta^{(i-1)}.\]
		
		Then, the $\Sigma^{(i)}$ can be written as
		\[ \Sigma^{(i)} = XW^{(i-1)} - \frac{\alpha}{m}z_3\Delta^{(i-1)} - y,\]
		
		Note that, in $REAL$, the output of $\Sigma^{(i)}$ by $S$ is
		\[  XW^{(i)} - y =  XW^{(i-1)} - \frac{\alpha}{m}  XX^T\Delta^{(i-1)} - y.\]
		
		Thus, the joint view of all parties in $SIM$ with knowledge of $z_3$ will be perfectly indistinguishable to $\{ XW^{(i-1)},  XX^T\}$ which is the output in $REAL$.
		
	\end{itemize}
	
	All in all, the output of the simulator $SIM$ of each training iteration is perfectly indistinguishable from the output of $REAL$.
	For the simulator with respect to $I = I_0 \cup I_1$, 
knowledge of $z_1 = \sum z_1^{l(i-1)}\mid_{l\in S \setminus  I} = \sum X^lW^l\mid_{l\in S \setminus  I} $, $z_2 = \sum z_2^l\mid_{l\in S \setminus  I} = \sum X^l(X^l)^T\mid_{l\in S \setminus  I}$ and $z_3 = XX^T$ is sufficient, completing the proof.
	
\end{proof}


\subsection{Privacy Preservation Guarantee}
\label{sec:maintheorem}
With lemma introduced in Section \ref{sec:pre}, and party knowledge discussed in Section \ref{sec:party}, we give our theorem of privacy preservation guarantee.
\begin{theorem}
	\label{theo:security}
	Let $I$ denote the adversary party with party knowledge $\{X^{l^\prime}\mid_{l^\prime \in I,} W^{l^\prime}\mid_{l^\prime \in I}, XW, \sum X^lW^l \mid_{l \in S \setminus I}, \sum X^l(X^l)^T \mid_{l \in S \setminus I}, XX^T\}$. Let $X^H$ denote the vertical concatenation of $\{X^{l}\mid_{l \in S \setminus I}\}$, which is the concatenation of honest local nodes' training datasets. Let $r$ denote the rank of  $X^H$.
	\codename preserves $\varepsilon$-privacy against adversary party $I$\ 
	on the training dataset $X \in \mathbb{R}^{m \times n}$, 
	where $\varepsilon \leq (r!)^{-1}$.
\end{theorem}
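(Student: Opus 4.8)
The plan is to combine Theorem~\ref{theo:party} with Lemma~\ref{lemma:eigen} through two reductions. By Theorem~\ref{theo:party} the joint view of the adversary party $I$ is \emph{perfectly} indistinguishable from a simulation fed only $I$'s own inputs $\{X^{l^\prime}\mid_{l^\prime\in I}, W^{l^\prime}\mid_{l^\prime\in I}, XW\}$ together with $Z=\{z_1,z_2,z_3\}$; hence, up to perfect indistinguishability, $I$ learns nothing beyond these quantities and can do no better than guess according to the posterior they induce. The first step is to strip $Z$ down to what genuinely constrains honest secrets. Since $z_3=XX^T=\sum_{l=1}^s X^l(X^l)^T$ and $I$ knows $X^l$ for every $l\in I$, the matrices $z_2$ and $z_3$ are mutually derivable from each other and $I$'s own inputs; and $z_1=XW-\sum_{l\in I}X^lW^l$ is already determined by $I$'s own inputs, so it reveals nothing new. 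Thus the only non-trivial leakage about honest raw data is the Gram matrix $A:=z_2=X^H(X^H)^T$ of the honest concatenation $X^H$, and there is no leakage at all about the honest models $W^l$, $l\in S\setminus I$.

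Second, I would invoke Lemma~\ref{lemma:eigen} with $B:=X^H$ and $A:=X^H(X^H)^T$: $A$ is positive (semi-)definite and $X^H$ has rank $r$ by hypothesis. The number of columns of $X^H$ equals $\sum_{l\in S\setminus I}d_l$, each $d_l$ is private, and the assumption $t<s-1$ forces at least two honest nodes, so this column count cannot be recovered by elimination either. The hypotheses of Lemma~\ref{lemma:eigen} therefore hold, giving $P(X^H\mid A)\le (r!)^{-1}$.

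It remains to pass from the unreconstructibility of the whole $X^H$ to that of any single honest $X^l$. The $\ge r!$ factorizations of $A$ supplied by Lemma~\ref{lemma:eigen} come from distinct orderings of the eigenvectors and so differ from one another by a permutation of the $r$ informative columns; restricted to the fixed block of columns that forms a given honest node's $X^l$, such permutations generically yield pairwise distinct candidates, and since the block boundaries (dictated by the private $d_l$'s) are themselves unknown to $I$, this can only enlarge the ambiguity. Hence for every honest $S_l$ there remain at least $r!$ a priori equiplausible values of $X^l$ consistent with $I$'s knowledge, so $P(X^l\mid \text{knowledge})\le (r!)^{-1}$; for $W^l$ the consistent set is the positive-dimensional affine family of solutions of $\sum_{l\in S\setminus I}X^lW^l=z_1$ with $z_1$ already known, so that probability is $0$. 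Taking the larger of the two cases yields $\varepsilon\le (r!)^{-1}$, as claimed. The crux — and the step I would treat most carefully — is precisely this last reduction: recovering part of a secret can in principle be easier than recovering all of it, so one must verify that the $r!$-fold eigenvector-ordering ambiguity does not collapse upon restricting to the columns of a single honest node, which is exactly where the privacy of the $d_l$'s and the existence of $\ge 2$ honest nodes do the real work.
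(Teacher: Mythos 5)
Your proposal is correct and follows essentially the same route as the paper: invoke Theorem~\ref{theo:party} to bound the adversary's view by its own inputs plus $Z$, then apply Lemma~\ref{lemma:eigen} to the honest Gram matrix $X^H(X^H)^T$ of rank $r$, using the privacy of the $d_l$'s and the $t<s-1$ (hence $\geq 2$ honest nodes) assumption to argue that neither individual $X^l$ nor $W^l$ can be isolated, yielding $\varepsilon\le (r!)^{-1}$. Your reorganization --- noting up front that $z_1$ and $z_3$ are redundant given the adversary's own inputs so everything reduces to $z_2$ --- is a slightly cleaner packaging of the same argument (the paper instead processes $XX^T$, $XW$, the honest sums, and the adversary's inputs sequentially, applying the lemma twice), and your explicit flagging of the whole-of-$X^H$ versus single-$X^l$ reduction is a point the paper's sketch treats no more rigorously than you do.
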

\begin{proof}
	We give some sketches here. 
	
	We start with the party knowledge $XX^T$.  In \codename, $n$ (i.e. the number of column of $X$) is unknown given the adversary threshold $t < s-1$. In addition, the rank of $X$, denoted as $R$, is $> r$. Invoking Lemma \ref{lemma:eigen} gives that the probability of solving the $X$ of rank $R$ given $XX^T$ is 	$\leq (R!)^{-1}<(r!)^{-1}$. Then we combine the party knowledge $XW$ ($X \in \mathbb{R}^{m \times n}$, and $W \in \mathbb{R}^{n \times k}$). From Theorem \ref{theo:party}, $XX^T$ is the only information required by $SIM$ with respect to $I_1$ with the knowledge of $XW$. In other words, combining the $XW$ gives no more information other than $XX^T$. Therefore, given an unknown $n$, and the probability of solving the $X$ $< (r!)^{-1}$, we have the probability of solving $W$ is also $< (r!)^{-1}$.

	Then we continue to combine the party knowledge of $\{\sum X^lW^l \mid_{l \in S \setminus I}, \\ \sum X^l(X^l)^T \mid_{l \in S \setminus I}\}$. They are the sum of real number matrices and the sum of non-negative real number matrices respectively. Given the adversary threshold $t < s-1$, which means the number of honest local nodes (i.e. $|l|_{l \in S \setminus  I}$) is $\geq 2$, we have a negligible probability to derive any $X^lW^l$ or $X^l(X^l)^T$ from their sum. In addition, $d_l$ (the number of columns of $X^l$) is also unknown to $I$. Therefore, with the probability of solving the $X$ $< (r!)^{-1}$, the probability of solving either $X^l$ or $W^l$ is also $< (r!)^{-1}$.
	
	At last, we combine the party knowledge of $\{X^{l^\prime}\mid_{l^\prime \in I}, W^{l^\prime}\mid_{l^\prime \in I}\}$. First, they are the input of $I$, which are independent of \{$\sum X^lW^l \mid_{l \in S \setminus I}, \sum X^l(X^l)^T \mid_{l \in S \setminus I}$\}. Next, we combine them into $X(X)^T$, which will give the adversary the problem of solving $X^H(X^H)^T$. As each of $d_l \mid_{l \in S \setminus I}$ is unknown to $I$, we have the number of column of $X^H$ is also unknown to $I$. Thus, with Lemma \ref{lemma:eigen}, we have the probability of solving the $X^H$ of rank $r$  is $\leq (r!)^{-1}$. Similarly, combing $\{X^{l^\prime}\mid_{l^\prime \in I}, W^{l^\prime}\mid_{l^\prime \in I}\}$ to $XW$ will also give the probability of solving $W^H$ $\leq (r!)^{-1}$. 

	Thus, \codename preserves $\varepsilon$-privacy against adversary parties $I$ on $X$, and $\varepsilon \leq (r!)^{-1}$.
\end{proof}

With Theorem \ref{theo:security}, we demonstrate that, with a sufficient rank of the training dataset of honest parties, e.g. $\geq$ 35, which is common in real-world datasets, \codename achieves $10^{-40}$-privacy.

\section{Correctness Analysis and Case Study}
\label{sec:corr&case}
In this section, we first prove the correctness of \codename
when distributing learning algorithms that are based on gradient descent optimization.
Then we use a recurrent neural network as a case study to illustrate the collaborative learning process in \codename.
\subsection{Correctness of \codename's Gradient Descent Optimization}
\label{sec:correct}
The following theorem demonstrates that if a non-distributed gradient descent optimization algorithm taking $X$ as input, denoted by $\mathcal{F}_{GD}(X)$, converges to a local/global minima $\eta$, then executing \codename with the same hyper settings (such as cost function, step size, and model structure) on $X^1, ..., X^s$, denoted by $\mathcal{F}^{\prime}_{GD}(\{(S_l, X^l), Agg\}_{S_l \in S})$, also converges to $\eta$.
\begin{theorem}
	\label{theo:correct}
	\codename's distributed algorithm of solving gradient descent optimization  $\mathcal{F}^{\prime}_{GD}(\{(S_l, X^l), Agg\}_{S_l \in S})$ is correct. 	
\end{theorem}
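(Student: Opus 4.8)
\textit{Proof proposal.} The plan is to argue by induction on the gradient-descent iteration index that the coefficient matrix maintained collectively by \codename and the one maintained by the non-distributed algorithm $\mathcal{F}_{GD}$ stay identical throughout training. Once the two procedures are shown to generate exactly the same sequence of iterates, both the per-iteration output equality required by Definition~\ref{def:correct} and the claimed convergence of $\mathcal{F}^{\prime}_{GD}$ to the same minimum $\eta$ as $\mathcal{F}_{GD}$ follow immediately.

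First I would fix the block structure. Writing $X = [X^1 \mid \cdots \mid X^s]$ for the feature-wise concatenation of the local datasets and $W = [W^1; \cdots; W^s]$ for the corresponding row-block partition of the global coefficient matrix (so that $W^l$ is exactly the block of $W$ acting on $S_l$'s features), block matrix multiplication gives the identity $XW = \sum_{l=1}^{s} X^l W^l$. I would also record that the regularization-term gradient $\tau$ splits accordingly as $\tau = [\tau^1; \cdots; \tau^s]$ with $\tau^l$ a function of $W^l$ alone; this holds for the coordinate-separable regularizers (e.g. $L_2$) used in practice and is implicit in the statement of Equation~\ref{equ:chain}.

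The base case is that \codename and $\mathcal{F}_{GD}$ start from the same $W^{(0)}$, which is part of the ``same hyper settings'' hypothesis (concretely, a shared initialization seed). For the inductive step, suppose the concatenation $W^{(i)} = [W^{1(i)}; \cdots; W^{s(i)}]$ of \codename's local models equals $\mathcal{F}_{GD}$'s iterate $W^{(i)}$. In Step~1 each $S_l$ forms $X^l W^{l(i)}$; by the homomorphic-addition property of the additive secret sharing scheme together with the reconstruction identity of Step~3 (Section~\ref{sec:workflow}), the aggregator recovers $\sum_l X^l W^{l(i)} = X W^{(i)}$ \emph{exactly}. Hence in Step~4 it evaluates $\Delta^{(i)} = \frac{\partial J}{\partial \sigma}\frac{\partial \sigma}{\partial XW}$ on precisely the same argument $XW^{(i)}$ as $\mathcal{F}_{GD}$ does, so $\Delta^{(i)}$ is the same matrix in both executions. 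In Step~5 each $S_l$ sets $W^{l(i+1)} = W^{l(i)} - \alpha(\Delta^{(i)} X^l + \tau^l)$; stacking these row blocks and using $[\Delta^{(i)} X^1 + \tau^1; \cdots; \Delta^{(i)} X^s + \tau^s] = \Delta^{(i)} X + \tau = \frac{\partial J}{\partial W}\big|_{W^{(i)}}$ (Equation~\ref{equ:chain}), the concatenated update is $W^{(i+1)} = W^{(i)} - \alpha\,\frac{\partial J}{\partial W}\big|_{W^{(i)}}$, which is exactly the non-distributed update rule. Therefore $W^{(i+1)}$ agrees in both executions, closing the induction.

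I expect the main obstacle to be justifying the two ``exactness'' claims inside the inductive step. The first is that $\Delta$ is genuinely the \emph{only} quantity coupling the blocks, i.e. that $\partial J/\partial W^l$ depends on the other parties' parameters solely through the shared pre-activation $XW$; this is precisely what the chain-rule factorization of Section~\ref{sec:gd} provides, but it must be argued to cover the whole class of gradient-descent-trained models, including deep networks (there $W^l$ is the sub-block of first-layer weights attached to $S_l$'s features, $XW$ is the first-layer pre-activation, $\Delta$ is the backpropagated gradient with respect to that pre-activation, and $Agg$ carries out the remaining forward and backward passes). The second is that the secret-sharing reconstruction over $\mathbb{Z}_{2^{32}}$ introduces no arithmetic error, which rests on the standard fixed-point encoding argument that the magnitudes involved do not overflow the modulus. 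With both in place, the induction and the convergence conclusion are routine.
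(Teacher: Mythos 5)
Your proposal is correct and follows essentially the same route as the paper: both arguments reduce to showing that, because $XW^{(i)}=\sum_l X^l W^{l(i)}$, the aggregator's $\Delta^{(i)}$ coincides with the centralized one, so stacking the per-node updates $W^{l(i)}-\alpha(\Delta^{(i)}X^l+\tau^l)$ reproduces the centralized gradient step, giving identical iterate sequences and hence the same limit $\eta$. Your version is in fact slightly more careful than the paper's, which leaves implicit the shared initialization, the block-separability of the regularization term $\tau$, and the exactness of the secret-sharing reconstruction that you spell out.
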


\begin{proof}
	Let $\mathcal{F}_{GD}(X) = \eta$ denote the convergence of $\mathcal{F}_{GD}(X)$ to the  local/global minima $\eta$. Let $W_i$ denote the model parameters of $\mathcal{F}_{GD}$  at $i^{th}$ training iteration. Let $W_i^\prime = |\{W^l_i \}|_{l \in \{1,..,s\}}$ denote the vertical concatenation on $\{W^l_i \}_{l \in \{1,..,s\}}$, i.e., the model parameters of $\mathcal{F}^\prime_{GD}$  at $i^{th}$ training iteration.
		
	
	
	In $\mathcal{F}_{GD}$, the $i^{th}$ $(i \geq 1)$ training iteration update $W_i$ such that
	\begin{equation}
	\label{equ:correct1}
	\begin{aligned}
	W_i & =  W_{i-1} - \alpha\frac{\partial{J}}{\partial{W}}  =  W_{i-1} - \alpha\frac{\partial{J}}{\partial{(XW_{i-1})}} \frac{\partial{XW_{i-1}}}{\partial{W_{i-1}}}\\
	& =  W_{i-1} - \alpha\frac{\partial{J}}{\partial{(XW_{i-1})}} X.
	\end{aligned}
	\end{equation}
In $\mathcal{F}^{\prime}_{GD}$, each node $S_l$ updates its local $W^l_i$ in
	\begin{align*}
	W^l_i & =  W^l_{i-1} - \alpha \Delta X^l  =  W^l_{i-1} - \alpha\frac{\partial{J}}{\partial{(XW^\prime_{i-1})}} \frac{\partial{X^lW^l_{i-1}}}{\partial{W^l_{i-1}}}\\
	& =  W^l_{i-1} - \alpha\frac{\partial{J}}{\partial{(XW^\prime_{i-1})}} X^l.
	\end{align*}
	Since $W_i^\prime = |\{W^l_i \}|_{l \in \{1,..,s\}}$, and $X = |\{X^l \}|_{l \in \{1,..,s\}}$, we have in $\mathcal{F}^{\prime}_{GD}$ that
	\begin{equation}
	\label{equ:correct2}
	\begin{aligned}
	W_i^\prime & = |\{(W^l_{i-1} - \alpha\frac{\partial{J}}{\partial{(XW^\prime_{i-1})}} X^l) \}|_{l \in \{1,..,s\}}
	=  W_{i-1}^\prime- \alpha\frac{\partial{J}}{\partial{(XW^\prime_{i-1})}} X.
	\end{aligned}
	\end{equation}
	Comparing Equation \ref{equ:correct1} and \ref{equ:correct2},
	we can find that $W_i$ and $W_i^\prime$ are updated using the same equation.
Therefore, with $\mathcal{F}_{GD}(X) = \eta$,  the gradient descent guarantees $\mathcal{F}^{\prime}_{GD}(\{(S_l, X^l), Agg\}_{S_l \in S})$ also converges to $\eta$.
	
	Thus, $\mathcal{F^\prime}_{GD}(\{(S_l, X^l), Agg\}_{S_l \in S}) = \mathcal{F}_{GD}(X)$.
\end{proof}

\subsection{Case Study}
\label{sec:case}

Figure \ref{fig:rnn} shows an example of a two-layer feed-forward recurrent neural network (RNN). Every neural layer is attached with a time subscript $c$. 
The weight matrix $W$ maps the input vector $X^{(c)}$ to the hidden layer $h^{(c)}$. The weight matrix $V$ propagates the hidden layer to the output layer $\hat{y}^{(c)}$. The weight matrix $U$ maps the previous hidden layer to the current one. 

\begin{figure}
	\centering
	\includegraphics[width=10cm]{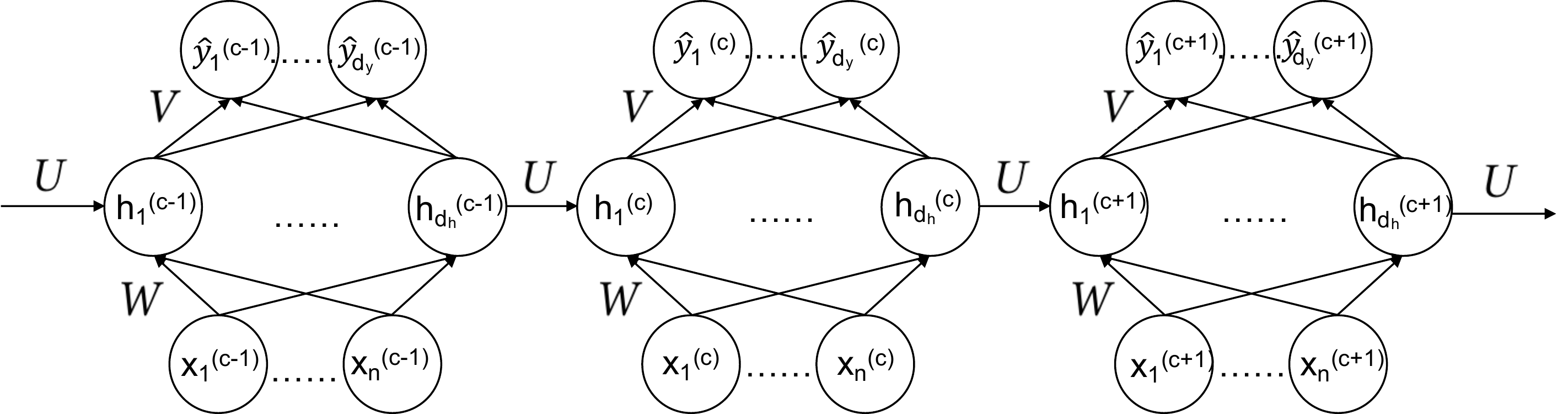}
	\caption{An example of recurrent neural network} 
	\label{fig:rnn}
\end{figure}

\paragraph{Original algorithm} Recall that the original non-distributed version of the RNN is divided into the forward propagation and backward propagation through time.
First, in the forward propagation, 
the output of the hidden layer propagated from the input layer is calculated as
\begin{equation}
\label{equ:forward1}
\begin{array}{ll}
& Z_h^{(c)} = X^{(c)}W + Uh^{(c-1)} + b_h\\
& h^{(c)} = \sigma_1(Z_h^{(c)})
\end{array}
\end{equation}
The output of the output layer propagated from hidden layer is calculated as
\begin{equation}
\label{equ:forward2}
\begin{array}{ll}
& Z_y^{(c)} = Vh^{(c)} + b_y   \\
& \hat{y}^{(c)} = \sigma_2(Z_y^{(c)})
\end{array}
\end{equation}
Then, the cost function $\sum\limits_{c=0}^{T} J(\hat{y}^{(c)}, y^{(c)})$, and the coefficient matrices $W,U,V$ are updated using the backward propagation through time.

The gradients of $J^{(c)}$ with respect to $V$ is calculated as $\frac{\partial{J^{(c)}}}{\partial{V}} = \frac{\partial{J^{(c)}}}{\partial{\hat{y}^{(c)}}}\frac{\partial{\hat{y}^{(c)}}}{\partial{Z_y^{(c)}}}\frac{\partial{Z_y^{(c)}}}{\partial{V}}$. We let $\frac{\partial{J^{(c)}}}{\partial{\hat{y}^{(c)}}} = \delta_{loss}^{(c)}$,  $\delta_{\hat{y}}^{(c)}= \sigma_2^\prime(Z_y^{(c)})$.
The gradient of $J^{(c)}$ with respect to $V$ can be written as:
\begin{equation}
\label{equ:back1}
\frac{\partial{J^{(c)}}}{\partial{V}} = [\delta_{loss}^{(c)}\circ\delta_{\hat{y}}^{(c)}](h^{(c)})^T.
\end{equation}

The gradients of $J^{(c)}$ with respect to $U$ is calculated as $\frac{\partial{J^{(c)}}}{\partial{U}} = \frac{\partial{J^{(c)}}}{\partial{\hat{y}^{(c)}}}\frac{\partial{\hat{y}^{(c)}}}{\partial{h^{(c)}}}\frac{\partial{h^{(c)}}}{\partial{U}}$, where $\frac{\partial{J^{(c)}}}{\partial{\hat{y}^{(c)}}}\frac{\partial{\hat{y}^{(c)}}}{\partial{h^{(c)}}} = V^T[\delta_{loss}^{(c)}\circ\delta_{\hat{y}}^{(c)}]$, and
\begin{align*}
 \frac{\partial{h^{(c)}}}{\partial{U}} = \sum\limits_{k=0}^{c}\frac{\partial{h^{(c)}}}{\partial{h^{(k)}}}\frac{\partial^+{h^{(k)}}}{\partial{U}}= \sum\limits_{k=0}^{c}\big(\prod\limits_{i=k+1}^{c}\sigma_1^\prime(Z_h^{(i)})U\big) \sigma_1^\prime(Z_h^{(k)})h^{(k-1)}.
\end{align*}
Let $\delta_{h}^{(c)} = \sigma_1^\prime(Z_h^{(c)})$, then the gradient of $J^{(c)}$ with respect to $U$ is
\begin{equation}
\label{equ:back2}
\frac{\partial{J^{(c)}}}{\partial{U}} = V^T\delta_{loss}^{(c)}\circ\delta_{\hat{y}}^{(c)}\bigg(\sum\limits_{k=0}^{c}\big(\prod\limits_{i=k+1}^{c}\delta_{h}^{(i)}U \big )\delta_{h}^{(k)}h^{(k-1)}\bigg).
\end{equation}

Similarly, the gradients of $J^{(c)}$ with respect to $W$ is calculated as:
\begin{equation}
\label{equ:back3}
\frac{\partial{J^{(c)}}}{\partial{W}} = V^T\delta_{loss}^{(c)}\circ\delta_{\hat{y}}^{(c)}\bigg(\sum\limits_{k=0}^{c}\big(\prod\limits_{i=k+1}^{c}\delta_{h}^{(i)}U \big )\delta_{h}^{(k)}X^{(k)}\bigg).
\end{equation}

\begin{algorithm}
	\caption{Privacy Preserving Collaborative Recurrent Neural Network}
	\label{alg:rnn}
	\begin{algorithmic}[1]
		\State{\textbf{Input:} Local training data $X^{l(c)}$ ($l=[1, ... , s] , c= [0,...,T]$), }
		\State{\qquad learning rate $\alpha$}
		\State{\textbf{Output:} Model parameters $W^l$ ($l=  [1, ... , s]$), $ U, V$}
		\State{\textbf{Initialize:} Randomize $W^l$ ($l=  [1, ... , s]$), $U, V$}
		\Repeat
		\State{\textbf{for all} $S_l \in S$ \textbf{do in parallel}}
		\State{\qquad $S_l: X^{l(c)}W^{l}, c= [0,...,T] $ }
		\State{\qquad $S_l: srt^{(c)} \leftarrow{X^{l(c)}W^{l}, c= [0,...,T]} $ }
		\State{\qquad $\{(S_l, E_{srt}^{(c)})\}_{S_l \in S} \leftarrow{Shr(srt^{(c)}, S)}$}
		\State{\textbf{end for}}
		\State{$X^{(c)}W \leftarrow{Rec(\{(S_l, E_{srt}^{(c)})\}_{S_l \in S})}$}
		\State{$A: Z_h^{(c)} \leftarrow {X^{(c)}W + Uh^{(c-1)} + b_h}$,  $h^{(c)} \leftarrow \sigma_1(Z_h^{(c)})$}
		\State{$A: Z_y^{(c)} \leftarrow Vh^{(c)} + b_y$, $\hat{y}^{(c)} = \sigma_2(Z_y^{(c)})$}
		\State{$A: \delta_{loss}^{(c)} \leftarrow \frac{\partial{J^{(c)}}}{\partial{\hat{y}^{(c)}}}, \delta_{\hat{y}}^{(c)} \leftarrow \sigma_2^\prime(Z_y^{(c)}), \delta_{h}^{(c)} = \sigma_1^\prime(Z_h^{(c)}) $}
		\State{$A: \frac{\partial{J^{(c)}}}{\partial{V}} \leftarrow [\delta_{loss}^{(c)}\circ\delta_{\hat{y}}^{(c)}](h^{(c)})^T$}
		\State{$A: \frac{\partial{J^{(c)}}}{\partial{U}} \leftarrow V^T\delta_{loss}^{(c)}\circ\delta_{\hat{y}}^{(c)}\bigg(\sum\limits_{k=0}^{c}\big(\prod\limits_{i=k+1}^{c}\delta_{h}^{(i)}U \big )\delta_{h}^{(k)}h^{(k-1)}\bigg)
			$}
		\State{$A: V \leftarrow V - \alpha \sum\limits_{c=0}^{T} \frac{\partial{J^{(c)}}}{\partial{V}}$}
		\State{$A: U \leftarrow U - \alpha \sum\limits_{c=0}^{T} \frac{\partial{J^{(c)}}}{\partial{U}}$}		
		\State{\textbf{for all} $S_l \in S$ \textbf{do in parallel}}
		\State{\qquad  $\frac{\partial{J^{(c)}}}{\partial{W^l}} = V^T\delta_{loss}^{(c)}\circ\delta_{\hat{y}}^{(c)}\bigg(\sum\limits_{k=0}^{c}\big(\prod\limits_{i=k+1}^{c}\delta_{h}^{(i)}U \big )\delta_{h}^{(k)}X^{l(k)}\bigg)
			$ }
		\State{\qquad $W^l \leftarrow W^l - \alpha \sum\limits_{c=0}^{T} \frac{\partial{J^{(c)}}}{\partial{W^l}}$}
		\State{\textbf{end for}}
		\Until{Convergence} 
	\end{algorithmic}
\end{algorithm}

\paragraph{\codename algorithm} In \codename, each local node keeps $X^{l(c)}, c= [0,...,T]$, and maintains the coefficient matrix $W^l$ locally. The aggregation node maintains coefficient matrices $U, V$. Similar to its original non-distributed counterpart, the training process is divided into the forward propagation and backward propagation through time. Below we briefly outline these steps and the detailed algorithm is given by Algorithm \ref{alg:rnn}.



In the forward propagation, local nodes compute $X^{l(c)}W^{l}, c= [0,...,T]$ respectively (line 7)  (line number in Algorithm \ref{alg:rnn}), and $X^{(c)}W = \sum\limits_{l=1}^{s} X^{l(c)}W^l$ is calculated using the secret-sharing scheme (line 8 to line 11). The aggregation node then computes $Z_h^{(c)}, h^{(c)},  Z_y^{(c)}, \hat{y}^{(c)}$ using Equation \ref{equ:forward1} and \ref{equ:forward2} (line 12 and line 13).



In the backward propagation through time, for each $J^{(c)}$ at time $t$, the aggregation node computes $\delta_{loss}^{(c)}, \delta_{\hat{y}}^{(c)}, \delta_{h}^{(k)}, k=[0,...,t] $, and sends $\delta_{loss}^{(c)}, \delta_{\hat{y}}^{(c)}, \delta_{h}^{(k)}$ to local nodes (line 14). Then the aggregation node computes the gradients of $J^{(c)}$ with respect to $V, U$ using Equation \ref{equ:back1} and \ref{equ:back2} (line 15 and line 16), and updates $U, V$ (line 17 and line 18). The local nodes compute the gradients of $J^{(c)}$ with respect to $W^{l}$ using Equation \ref{equ:back4}, and update $W^l$ respectively (line 19 to line 22).



\begin{equation}
\label{equ:back4}
\frac{\partial{J^{(c)}}}{\partial{W^l}} = V^T\delta_{loss}^{(c)}\circ\delta_{\hat{y}}^{(c)}\bigg(\sum\limits_{k=0}^{c}\big(\prod\limits_{i=k+1}^{c}\delta_{h}^{(i)}U \big )\delta_{h}^{(k)}X^{l(k)}\bigg).
\end{equation}



\section{Performance Evaluation}
\label{sec:performance}

We implement \codename in C++. It uses the Eigen library \cite{eigenweb} to handle matrix operations, and uses ZeroMQ library \cite{hintjens2013zeromq} to implement the distributed messaging.
The experiments are executed on four Amazon EC2 c4.8xlarge machines 
with 60GB of RAM each, three of which act as local nodes and the other acts as the aggregation node.
To simulate the real-world scenarios, we execute \codename on both LAN and WAN network settings. In the LAN setting, machines are hosted in a same region, and the average network bandwidth is 1GB/s. 
In the WAN setting, we host these machines in different continents. 
The average network latency (one-way) is 137.7ms, and the average network throughput is 9.27MB/s.
We collect 10 runs for each data point in the results and report the average.
We use the MNIST dataset \cite{lecun-mnisthandwrittendigit-2010},
and duplicate its samples when its size is less than the sample size $m$~($m \geq 60,000$).

We take non-private machine learning which trains on the concatenated dataset as the baseline, and  compare with MZ17 \cite{mohassel2017secureml}, which is the state-of-the-art cryptographic solution for privacy preserving machine learning.
It is based on oblivious transfer (MZ17-OT) and linearly homomorphic encryption (MZ17-LHE). 
As shown in Fig. \ref{fig:comparison}, \codename achieves significant efficiency improvement over MZ17, and due to parallelization in the computing of the local nodes, 
\codename also outperforms the non-private baselines in the LAN network setting. 


\textbf{Linear regression and logistic regression}. We use mini-batch stochastic gradient descent (SGD) for training the linear regression and logistic regression. We set the batch size $|B| = 40$ with 4 sample sizes ($1,000$-$100,000$) in the linear regression and logistic regression. 

In the LAN setting, \codename achieves around $45$x faster than MZ17-OT. It takes $13.11s$ for linear regression (Figure \ref{fig:lrlan}) and $12.73s$ for logistic regression (Figure \ref{fig:loglan}) with sample size $m = 100,000$, while in MZ17-OT, $594.95s$ and $605.95s$ are reported respectively. \codename is also faster than the baseline, which takes $17.20s$  and  $17.42s$  for linear/logistic regression respectively.
In the WAN setting, \codename is around $9$x faster than MZ17-LHE.
It takes $1408.75s$ for linear regression (Figure \ref{fig:lrwan}) and $1424.94s$ for logistic regression (Figure \ref{fig:logwan}) with sample size $m = 100,000$, while in MZ17-LHE, it takes $12841.2s$ and $13441.2s$ respectively with the same sample size. 
It is worth mentioning that in MZ17, an MPC-friendly alternative function is specifically designed to replace non-linear sigmoid functions for training logistic regression, while in our framework, the non-linear function is used as usual.
To further break down the overhead to computation and communication, we
summarize the results of linear regression and logistic regression on other sample sizes in Table \ref{tab:lrlog}.

\begin{figure}[H]%
	\centering
	\subfloat[Linear Regression LAN]{{\includegraphics[width=4.0cm]{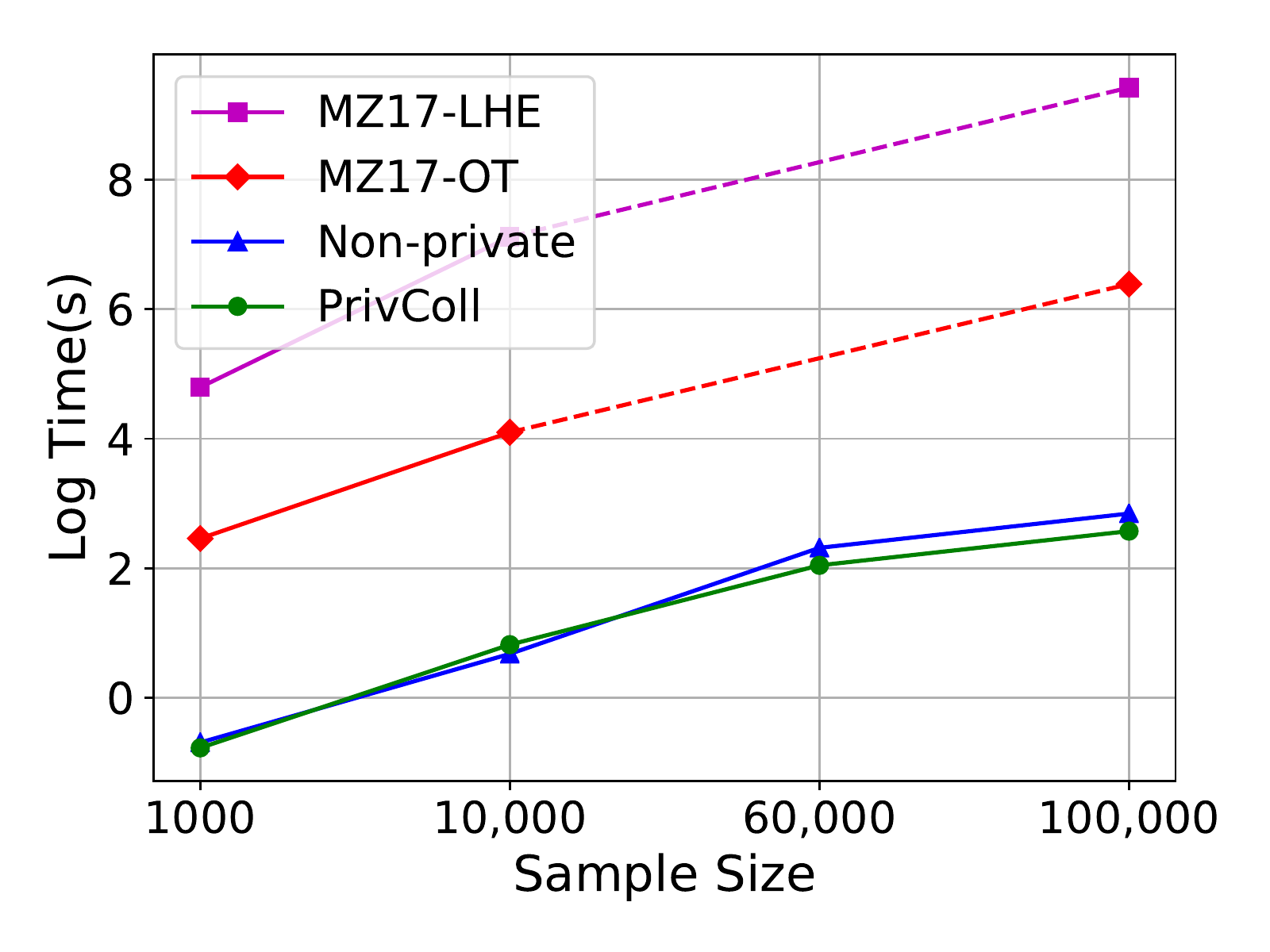} \label{fig:lrlan}}}%
	\subfloat[Logistic Regression LAN]{{\includegraphics[width=4.0cm]{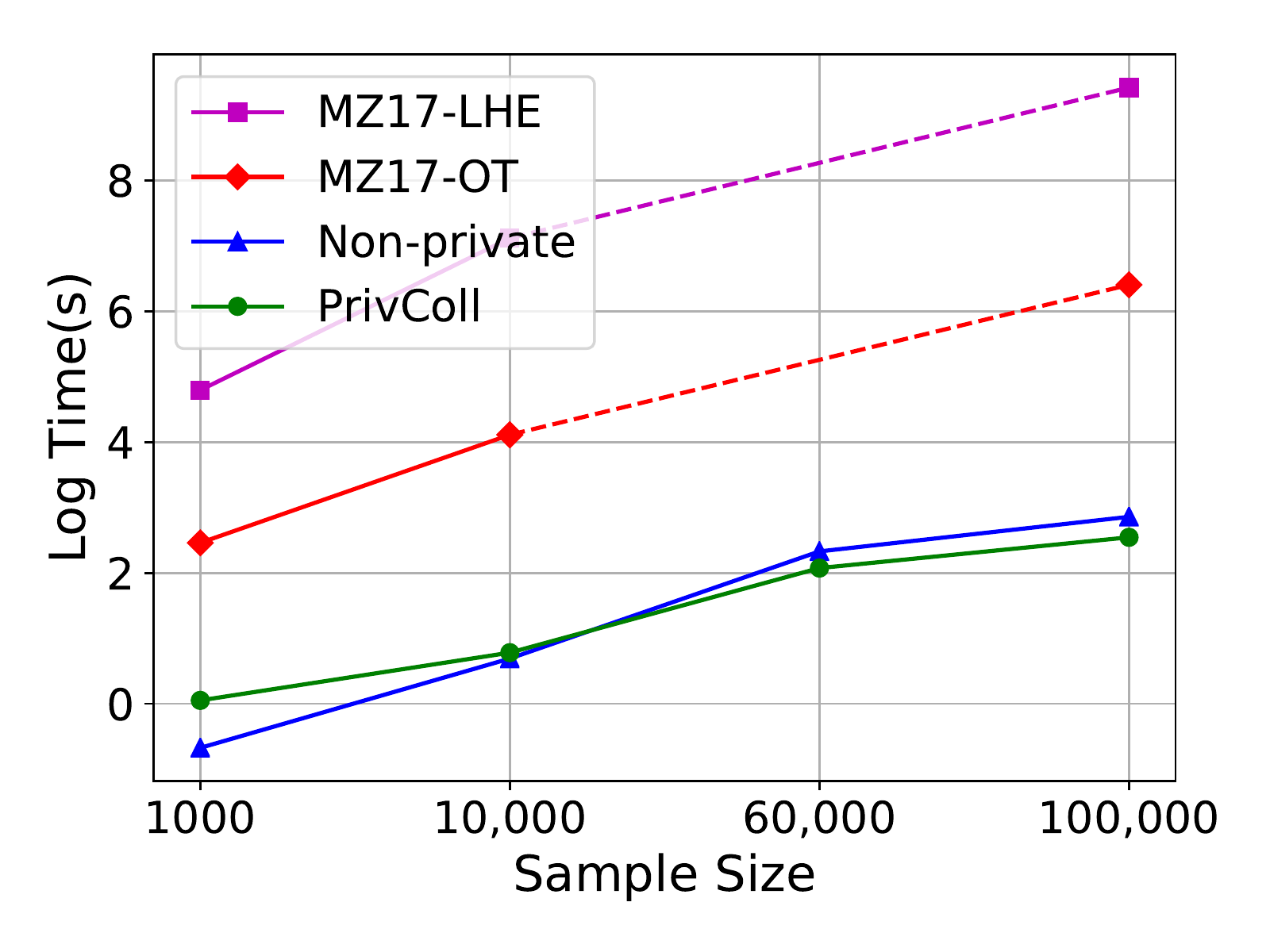} \label{fig:loglan}}}%
	\subfloat[Neural Network LAN]{{\includegraphics[width=4.0cm]{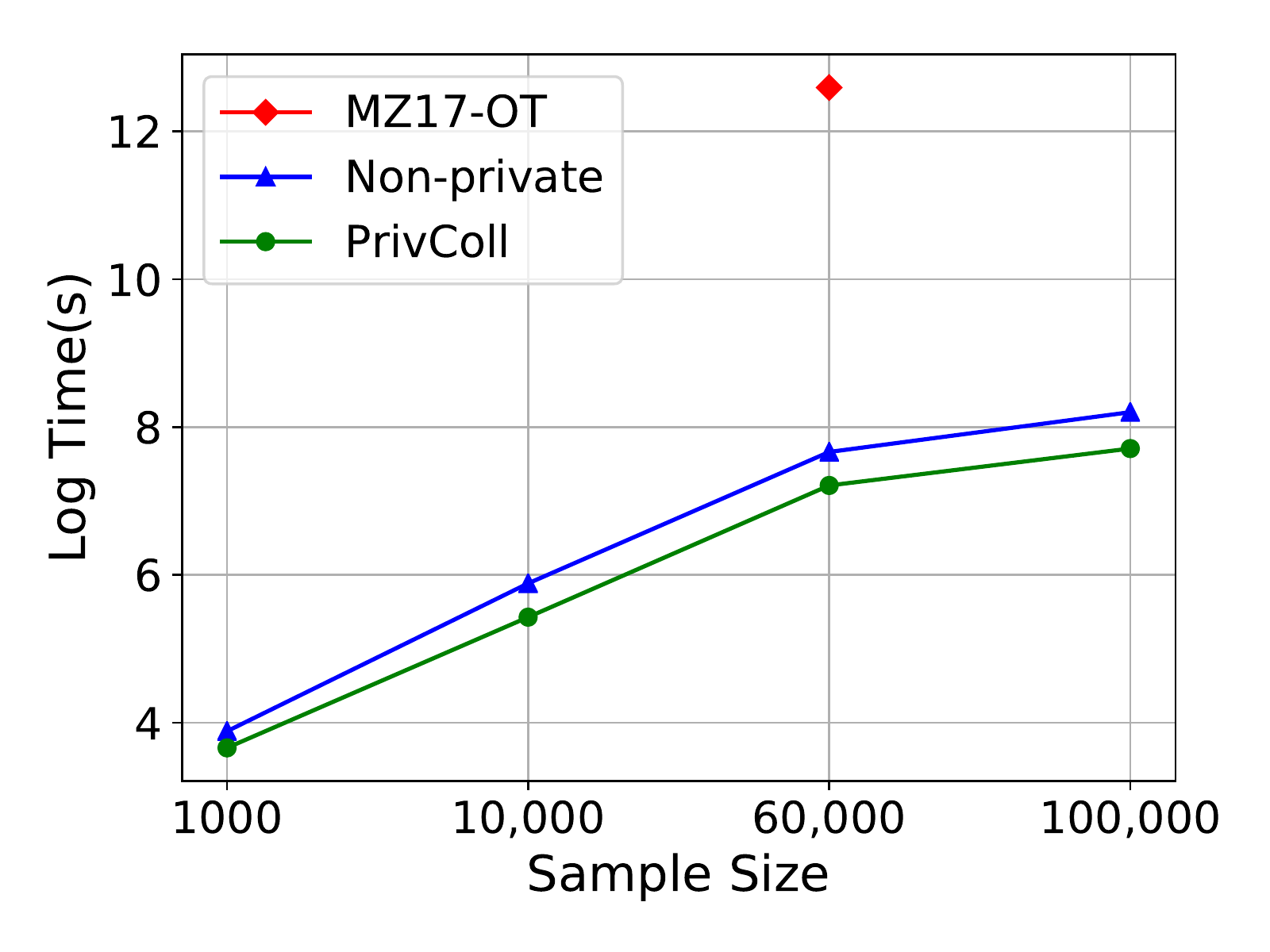} \label{fig:nnlan}}}%
	\qquad
	\subfloat[Linear Regression WAN]{{\includegraphics[width=4.0cm]{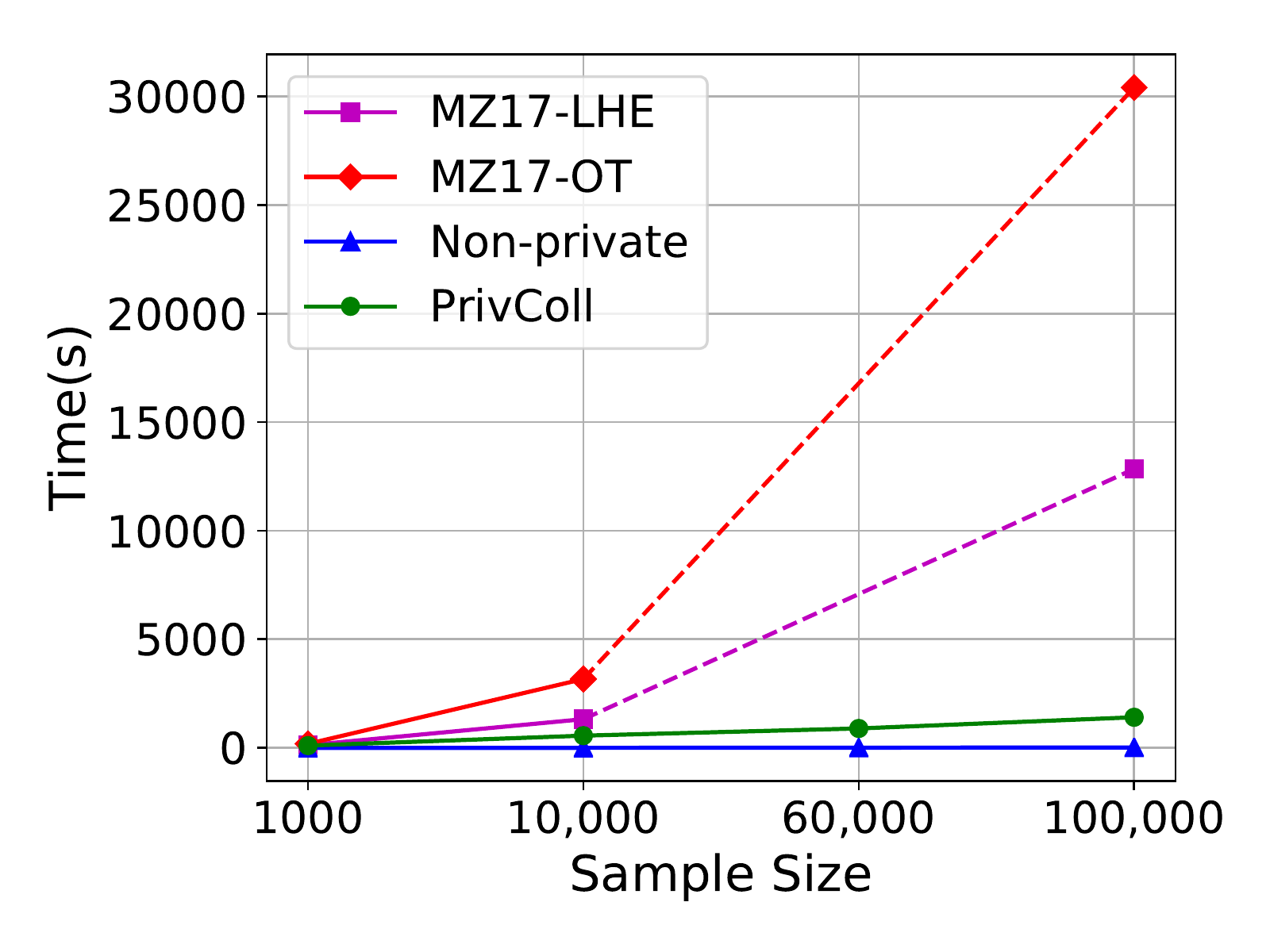} \label{fig:lrwan}}}%
	\subfloat[Logistic Regression WAN]{{\includegraphics[width=4.0cm]{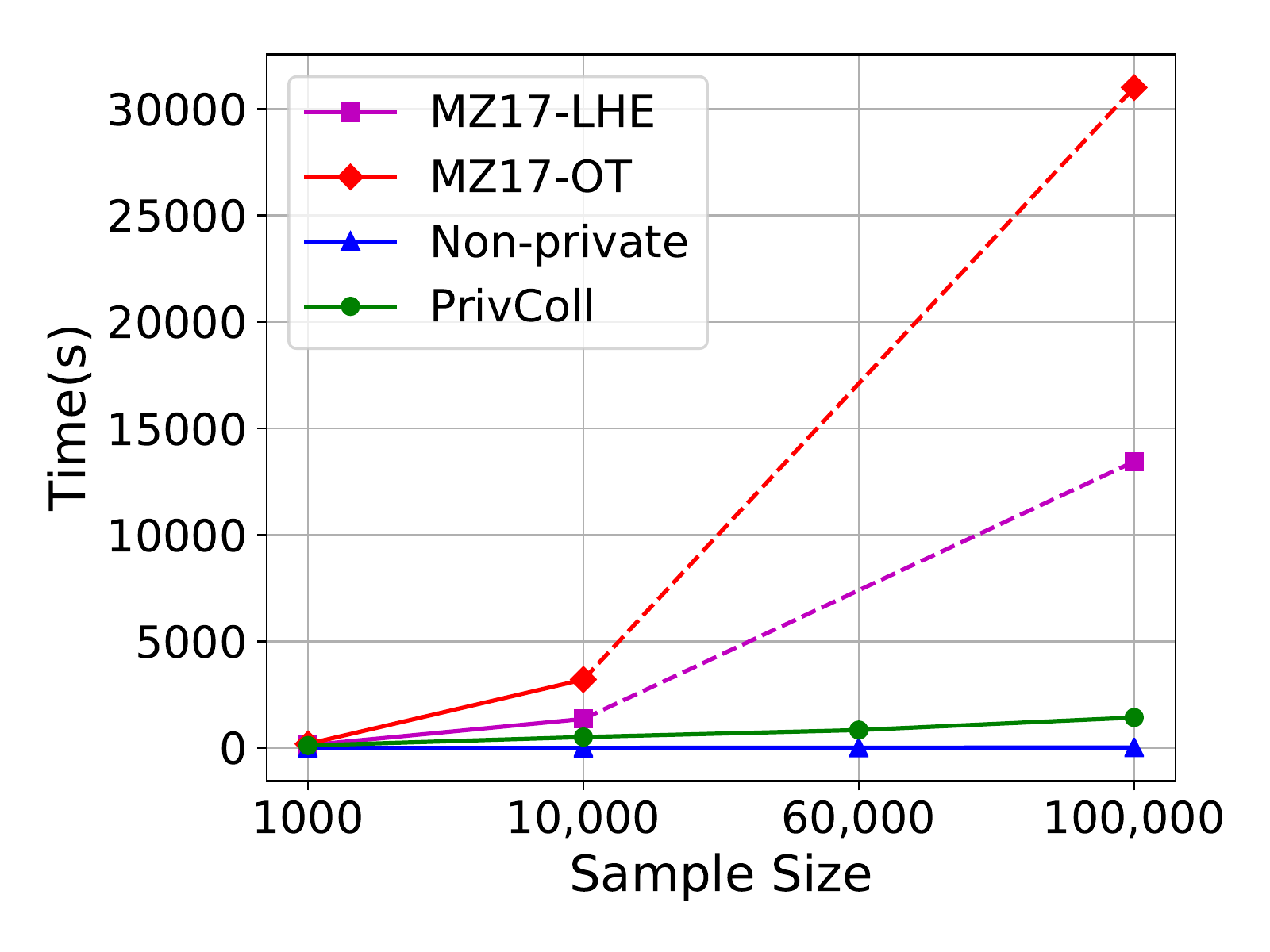} \label{fig:logwan}}}%
	\subfloat[Neural Network WAN]{{\includegraphics[width=4.0cm]{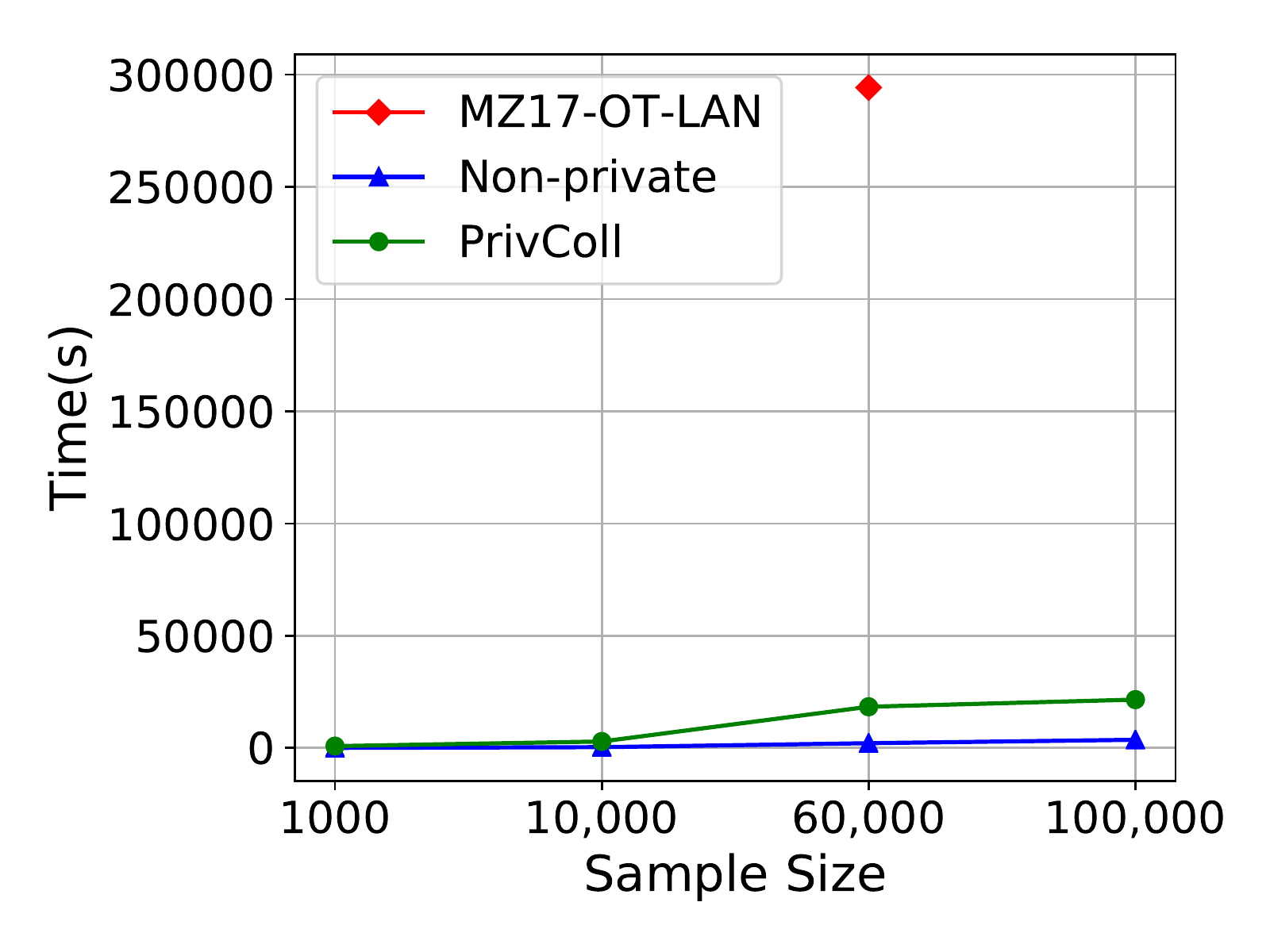} \label{fig:nnwan}}}%
	%
	\caption{Efficiency comparison. a-c) the natural logarithm of running time(s) as the sample size increases. d-f) running time(s) as the sample size increases. 
	}%
	\label{fig:comparison}
\end{figure}

\begin{table*}[]
	\caption{\codename's Overhead Breakdown for Linear/Logistic Regression}
	\label{tab:lrlog}
	\resizebox{\textwidth}{!}{
		\begin{tabular}{|c|c|c|c|c|c|c|c|c|c|c|}
			\hline
			\multirow{3}{*}{} & \multicolumn{5}{c|}{Linear Regression}                                                         & \multicolumn{5}{c|}{Logistic Regression}                                                       \\ \cline{2-11}
			& \multirow{2}{*}{Computation} & \multicolumn{2}{c|}{Communication} & \multicolumn{2}{c|}{Total} & \multirow{2}{*}{Computation} & \multicolumn{2}{c|}{Communication} & \multicolumn{2}{c|}{Total} \\ \cline{3-6} \cline{8-11}
			&                              & LAN              & WAN             & LAN         & WAN          &                              & LAN             & WAN              & LAN         & WAN          \\ \hline
			m=1,000           & 0.445s                       & 0.016s          & 103.28s          & 0.461s      & 103.73s       & 0.467s                       & 0.583s          & 109.57s           & 1.050s      & 110.04s       \\ \hline
			m=10,000          & 1.293s                       & 0.978s           & 561.53s         & 2.271s       & 562.83s      & 1.368s                       & 0.812s          & 506.64s          & 2.180s      & 508.01s      \\ \hline
			m=60,000          & 6.155s                        & 1.578s           & 887.63s         & 7.733s       & 893.79s      & 6.271s                        & 1.683s          & 829.69s          & 7.954s       & 835.96s      \\ \hline
			m=100,000         & 10.07s                        & 3.037s           & 1398.67s         & 13.11s      & 1408.75s      & 10.29s                        & 2.434s          & 1414.64s          & 12.73s      & 1424.94s       \\ \hline
		\end{tabular}
	}
\end{table*}

\textbf{Neural Network}.
We implement a fully connected neural network in \codename. It has two hidden layers with 128 neurons in each layer (same as MZ17) and takes a sigmoid function as the activation function.
For training the neural network, we set the batch size $|B| = 150$ with 4 sample sizes ($1,000$-$60,000$). 

In the LAN network setting, \codename achieves  $1352.87s$ (around $22.5$ minutes) (Figure \ref{fig:nnlan}) with sample size $m = 60,000$, while in MZ17, it takes  $294,239.7s$ (more than $81$ hours) with the same sample size.
\codename also outperforms the non-private baseline which takes $2,127.87s$.
In the WAN setting,  \codename achieves $18,367.88s$ (around $5.1$ hours) with sample size $m = 60,000$ (Figure \ref{fig:nnwan}), while in MZ17, it is not yet practical for training neural networks in WAN setting due to the high number of interactions and high communication. Note that, in Figure \ref{fig:nnwan}, we still plot the MZ17-OT-LAN result ($294,239.7s$), showing that even when running our framework in the WAN setting, it is still much more efficient compared to the MPC solutions in MZ17 run in the LAN setting.
The overhead breakdown on computation and communication is summarized in Table \ref{tab:nn}.

\begin{table}[]
	\centering
	\caption{\codename's Overhead Breakdown for Neural Network}
	\label{tab:nn}
	\resizebox{8.7cm}{!}{
		\begin{tabular}{|c|c|l|c|c|c|c|}
			\hline
			\multirow{3}{*}{} & \multicolumn{6}{c|}{Neural Network}                                                                                 \\ \cline{2-7}
			& \multicolumn{2}{c|}{\multirow{2}{*}{Computation}} & \multicolumn{2}{c|}{Communication} & \multicolumn{2}{c|}{Total} \\ \cline{4-7}
			& \multicolumn{2}{c|}{}                             & LAN            & WAN               & LAN         & WAN          \\ \hline
			m=1,000           & \multicolumn{2}{c|}{30.14s}                       & 8.729s          & 795.27s           & 38.87s      & 825.42s      \\ \hline
			m=10,000          & \multicolumn{2}{c|}{223.08s}                      & 4.683s          & 2662.58s          & 227.76s     & 2885.66s     \\ \hline
			m=60,000          & \multicolumn{2}{c|}{1320.77s}                     & 32.10s         & 17047.10s          & 1352.87s    & 18367.88s     \\ \hline
			m=100,000         & \multicolumn{2}{c|}{2180.74s}                     & 47.99s         & 19364.73s         & 2228.73s    & 21545.47s    \\ \hline
		\end{tabular}
	}
\end{table}



 \section{Related Work}
\label{sec:related}

The studies most related to \codename are~\cite{zheng2019towards,hu2019fdml}. 
Zheng et al~\cite{zheng2019towards} employ the lightweight additive secret sharing scheme for secure outsourcing of the decision tree algorithm for classification. 
Hu et al~\cite{hu2019fdml} propose FDML, which is a collaborative machine learning framework for distributed features, and the model parameters are protected by additive noise mechanism within the framework of differential privacy. 

There also have been some previous research efforts which have explored collaborative learning without exposing their trained models~\cite{jia2016privacy,zhang2019enabling, papernot2018scalable,zhang2020differentially}. For example, Papernot et al~\cite{papernot2018scalable} make use of transfer learning in combination with differential privacy to learn an ensemble of \emph{teacher} models on data partitions, and then use these models to train a private \emph{student} model.

In addition, there are more works on generic privacy-preserving machine learning frameworks  via HE/MPC solutions \cite{kwabena2019mscryptonet, zhang2017private, esposito2018securing, yuan2014privacy, hardy2017private, sadat2017safety, gilad2016cryptonets, will2015secure, ko2015stratus,ryffel2018generic} or differential privacy mechanism~\cite{abadi2016deep, gupta2018distributed,vepakomma2018split,abuadbba2020can}. Recent studies \cite{gascon2017privacy, gascon2016secure}  propose a hybrid multi-party computation protocol for securely computing a linear regression model.
In \cite{liu2017oblivious}, an approach is proposed for transforming an existing neural network to an oblivious neural network supporting privacy-preserving predictions.
In \cite{yuan2014privacy}, a secure protocol is presented to calculate the delta function in the back-propagation training.
In \cite{mohassel2017secureml}, a MPC-friendly alternative function is specifically designed to replace non-linear sigmoid and softmax functions, as the division and the exponentiation in these function are expensive to compute on shared values.

\section{Conclusion}
\label{sec:conclusion}
We have presented \codename, a practical privacy-preserving collaborative machine learning framework.  
\codename guarantees privacy preservation for  both local training  data  and   models trained on them, 
against an honest-but-curious adversary. 
It also ensures the correctness of a wide range of machine/deep learning algorithms, such as  linear regression, logistic regression, and a variety of neural networks. 
Meanwhile, \codename achieves a practical applicability. 
It is much more efficient compared to other state-of-art solutions.

%
%
%

%
%
%
 \bibliographystyle{splncs04}
 \bibliography{esorics20bib}

\end{document}